\newtheorem{theorem}{Theorem}
\newacronym{lstm}{LSTM}{Long Short Term Memory}
\newacronym{gnn}{GNN}{Graph Neural Network}
\newacronym{map}{MAP}{Maximum Aposteriori}
\newacronym{cyk}{CYK}{Cocke-Younger-Kasami}
\newacronym{scsg}{SCSG}{Stochastic Context Sensitive Grammar}
\newacronym{bert}{BERT}{Bidirectional Encoder Representations from Transformers}
\newacronym{gtnn}{GTNN}{Graph Transformer Neural Network}
\newacronym{nlp}{NLP}{Natural Language Processing}
\newcommand{\alphabet}{\mathcal{A}}
\newcommand{\state}{\mathbf{x}}
\newcommand{\stateinf}{\mathbf{\hat{x}}}
\newcommand{\statenoise}{\mathbf{w}}
\newcommand{\observation}{\mathbf{y}}
\newcommand{\observationNoise}{\mathbf{o}}
\newcommand{\control}{\mathbf{a}}
\newcommand{\utility}{u}
\definecolor{RED}{rgb}{1,0,0}\definecolor{BLUE}{rgb}{0,0,1} 
\providecommand{\DIFadd}[1]{#1}
\providecommand{\DIFdel}[1]{}
\providecommand{\DIFaddbegin}{} 
\providecommand{\DIFaddend}{} 
\providecommand{\DIFdelbegin}{} 
\providecommand{\DIFdelend}{} 
\providecommand{\DIFaddFL}[1]{\DIFadd{#1}} 
\providecommand{\DIFdelFL}[1]{\DIFdel{#1}} 
\providecommand{\DIFaddbeginFL}{} 
\providecommand{\DIFaddendFL}{} 
\providecommand{\DIFdelbeginFL}{} 
\providecommand{\DIFdelendFL}{} 
\newcommand{\DIFscaledelfig}{0.5}
\newsavebox{\DIFdelgraphicsbox} 
\newlength{\DIFdelgraphicswidth} 
\newlength{\DIFdelgraphicsheight} 
\LetLtxMacro{\DIFOincludegraphics}{\includegraphics} 
\newcommand{\DIFaddincludegraphics}[2][]{{\color{blue}\fbox{\DIFOincludegraphics[#1]{#2}}}} 
\newcommand{\DIFdelincludegraphics}[2][]{
\sbox{\DIFdelgraphicsbox}{\DIFOincludegraphics[#1]{#2}}
\settoboxwidth{\DIFdelgraphicswidth}{\DIFdelgraphicsbox} 
\settoboxtotalheight{\DIFdelgraphicsheight}{\DIFdelgraphicsbox} 
\scalebox{\DIFscaledelfig}{
\parbox[b]{\DIFdelgraphicswidth}{\usebox{\DIFdelgraphicsbox}\\[-\baselineskip] \rule{\DIFdelgraphicswidth}{0em}}\llap{\resizebox{\DIFdelgraphicswidth}{\DIFdelgraphicsheight}{
\setlength{\unitlength}{\DIFdelgraphicswidth}
\begin{picture}(1,1)
\thicklines\linethickness{2pt} 
{\color[rgb]{1,0,0}\put(0,0){\framebox(1,1){}}}
{\color[rgb]{1,0,0}\put(0,0){\line( 1,1){1}}}
{\color[rgb]{1,0,0}\put(0,1){\line(1,-1){1}}}
\end{picture}
}\hspace*{3pt}}} 
} 
\LetLtxMacro{\DIFOaddbegin}{\DIFaddbegin} 
\LetLtxMacro{\DIFOaddend}{\DIFaddend} 
\LetLtxMacro{\DIFOdelbegin}{\DIFdelbegin} 
\LetLtxMacro{\DIFOdelend}{\DIFdelend} 
\DeclareRobustCommand{\DIFaddbegin}{\DIFOaddbegin \let\includegraphics\DIFaddincludegraphics} 
\DeclareRobustCommand{\DIFaddend}{\DIFOaddend \let\includegraphics\DIFOincludegraphics} 
\DeclareRobustCommand{\DIFdelbegin}{\DIFOdelbegin \let\includegraphics\DIFdelincludegraphics} 
\DeclareRobustCommand{\DIFdelend}{\DIFOaddend \let\includegraphics\DIFOincludegraphics} 
\LetLtxMacro{\DIFOaddbeginFL}{\DIFaddbeginFL} 
\LetLtxMacro{\DIFOaddendFL}{\DIFaddendFL} 
\LetLtxMacro{\DIFOdelbeginFL}{\DIFdelbeginFL} 
\LetLtxMacro{\DIFOdelendFL}{\DIFdelendFL} 
\DeclareRobustCommand{\DIFaddbeginFL}{\DIFOaddbeginFL \let\includegraphics\DIFaddincludegraphics} 
\DeclareRobustCommand{\DIFaddendFL}{\DIFOaddendFL \let\includegraphics\DIFOincludegraphics} 
\DeclareRobustCommand{\DIFdelbeginFL}{\DIFOdelbeginFL \let\includegraphics\DIFdelincludegraphics} 
\DeclareRobustCommand{\DIFdelendFL}{\DIFOaddendFL \let\includegraphics\DIFOincludegraphics} 
\lstdefinelanguage{DIFcode}{ 
  moredelim=[il][\color{red}\sout]{\%DIF\ <\ }, 
  moredelim=[il][\color{blue}\uwave]{\%DIF\ >\ } 
} 
\lstdefinestyle{DIFverbatimstyle}{ 
	language=DIFcode, 
	basicstyle=\ttfamily, 
	columns=fullflexible, 
	keepspaces=true 
} 
\begin{document}
\title{Inferring Group Intent as a Cooperative Game. \\
An NLP-based Framework for Trajectory Analysis 
\DIFdelbegin \DIFdel{using Graph Transformer Neural Network}\DIFdelend 
}
\author{Yiming Zhang}
\affil{Cornell University, Ithaca, NY, USA} 

\author{Vikram Krishnamurthy}
\affil{Cornell University, Ithaca, NY, USA} 

\author{Shashwat Jain}
\affil{Cornell University, Ithaca, NY, USA} 

\DIFdelbegin 
\DIFdelend \DIFaddbegin \authoraddress{Yiming Zhang: yz2926@cornell.edu, Vikram Krishnamurthy: vikramk@cornell.edu, Shashwat Jain: sj474@cornell.edu.}

\editor{}
\supplementary{This research was supported by NSF grants CCF-2312198 and CCF-2112457 and Army Research grant W911NF-24-1-0083.}

\markboth{Zhang et.al}{Group Intent Inference using GTNN}
\DIFaddend 

\maketitle

\begin{abstract}
This paper studies group target trajectory intent as the outcome of a cooperative game where the \DIFdelbegin \DIFdel{complex-spatio }\DIFdelend \DIFaddbegin \DIFadd{complex spatio-temporal }\DIFaddend trajectories are modeled using an NLP-based generative model. 
In our framework, the group intent is specified by the characteristic function of a cooperative game, and allocations for \DIFdelbegin \DIFdel{player }\DIFdelend \DIFaddbegin \DIFadd{players }\DIFaddend in the cooperative game are specified by either the core, the Shapley value, or the nucleolus. 
The resulting allocations induce probability distributions that govern the coordinated spatio-temporal trajectories of the targets that reflect the group’s underlying intent.
We address two key questions: (1) How can the intent of a group trajectory be optimally formalized as the characteristic function of a cooperative game? (2) How can such intent be inferred from noisy observations of the targets?
To answer the first question, we introduce a Fisher-information-based characteristic function of the cooperative game\DIFaddbegin \DIFadd{, }\DIFaddend which yields probability distributions that generate coordinated spatio-temporal patterns. As a generative model for these patterns, we develop an NLP-based generative model built on formal grammar, enabling the creation of realistic multi-target trajectory data. To answer the second question,
we train a Graph Transformer Neural Network (GTNN) to infer group trajectory intent—expressed as the characteristic function of the cooperative game—from observational data with high accuracy. The self-attention function of the GTNN depends on the track estimates. Thus\DIFaddbegin \DIFadd{, }\DIFaddend the formulation and algorithms provide a multi-layer approach that spans  target tracking  (Bayesian signal processing) and the GTNN (for group intent inference).

\end{abstract}

\begin{IEEEkeywords}
Natural Language Processing, Cooperative Game, Syntactic Trajectory Patterns, Metalevel Tracking, Graph Transformer Neural Network, Group Intent
\end{IEEEkeywords}


\DIFdelend \section{Introduction}

This paper studies group trajectory intent as the outcome of a cooperative game, in which complex spatio-temporal trajectories are modeled using an NLP-based generative approach. 
Here, group intent denotes the underlying group objective or coordinated pattern that governs the trajectories of multiple targets—beyond their individual motions.
Inferring this intent enables trackers to more accurately anticipate future group behavior. Modeling group intent with cooperative game theory provides a principled way to capture target coordination. We infer group intent by using a transformer-based classifier inspired by \gls*{bert}, where trajectories are cast into a language-like representation as input. \gls*{bert} is a transformer-based language model that learns bidirectional representations of text, enabling accurate performance on diverse natural language processing tasks.

While multi-target tracking is a mature area, understanding the intent of a group of targets has received relatively little attention. 
To capture group target intent, classical radar-based tracking methods typically rely on Markov state-space models to represent target kinematics. These models are effective over short time horizons and have led to the development of numerous tracking algorithms in the literature \cite{bar2004estimation, xie1990multiple, oh2009markov}.
This paper is motivated by metalevel tracking on longer timescales. 
In metalevel tracking, in which one is interested in devising automated procedures that assist a human analyst to interpret the tracks obtained from a conventional tracking algorithm. 
On such longer timescales, real-world targets are driven by a premeditated intent. The intent of a group of targets is reflected in the characteristic function of the cooperative game that they participate in.

This paper is motivated by metalevel tracking, in which one seeks to devise automated procedures that assist a human analyst in interpreting the tracks obtained from a conventional tracking algorithm. At such longer timescales, real-world targets are often driven by premeditated intent. In our framework, the intent of a group of targets is reflected in the characteristic function of the cooperative game in which they participate.


\begin{figure}[h!]
    \centering
    \includegraphics[width=0.9\linewidth]{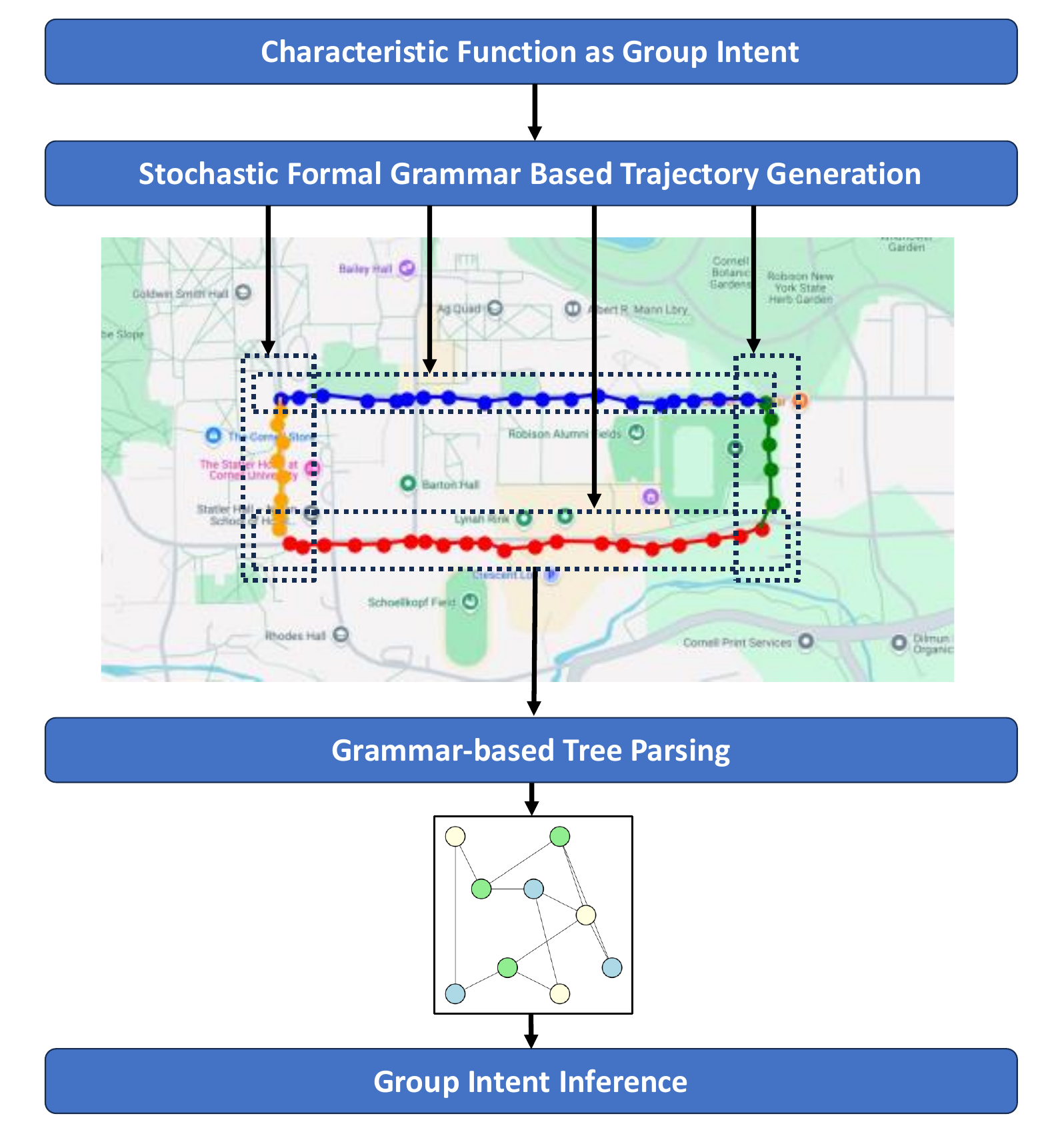}
    \caption{Overview of the proposed framework. A team of targets forms a rectangular surveillance pattern, modeled via a cooperative game and stochastic grammar, with a GTNN recovering the group’s intent from the parse tree.}
    \label{fig:teaser}
\end{figure}

{\bf Example}. Fig.~\ref{fig:teaser}, shows  a group of targets whose intent is to surveil an area in a rectangular pattern of arbitrary size. Collectively, their trajectories form the shape of a rectangle. 
We use the characteristic function of a cooperative game as a generative model for how the targets decide which part of the rectangle each target should surveil. 
To model such trajectory shapes, we employ a natural language-inspired approach based on stochastic formal grammars, which serve as generative models capable of capturing complex spatio-syntactic patterns \cite{ferrucci1994symbol,zhang2001context,kong2006spatial,balari2012knots}. Finally, to recover the group intent (characteristic function of the cooperative game), we develop a GTNN architecture that exploits the structure of the parse tree.

\subsection{Cooperative Game as a Generative Model for Group Intent}
\label{subsec:cooperative_game_motivation}

In multi-target tracking scenarios such as team-based navigation, agents act not only on individual goals but also in coordination to achieve a group-level intent. This intent is expressed through the group trajectory, describing how the group evolves in space and time. We formalize this idea using a cooperative game-theoretic framework, treating each target as a player in a game defined over possible trajectories. The framework ensures subgroup fairness by capturing  individual preferences and collective cohesion.

We model cooperation amongst targets via  the characteristic function  which quantifies the achievable utility (e.g., coverage or surveillance quality) of each coalition. The characteristic function  serves as a generative model for group intent, mapping coalitional structure to expected outcomes. To ground this model, we introduce a Fisher–information–based characteristic function that defines the probability distribution over the group’s objective. The allocation solutions of the cooperative game, namely, the core, nucleolus, and Shapley value~\cite{schmeidler1969nucleolus, shapley1953value}, capture stability, fairness, and marginal contributions respectively.

\subsection{Transformer based Intent Inference Architecture}
Building on the above generative model for group intent,  we represent group intent with a stochastic context-sensitive grammar (SCSG), where production rule probabilities  are determined  by cooperative-game allocations. SCSGs balance expressiveness and tractability, though inference is NP-hard. To address this, we represent trajectories as parse trees and employ \gls*{gtnn}~\cite{GCN1, GCN2} for efficient inference, enabling reliable modeling of group intent.

We design a \gls*{bert}-inspired architecture for intent inference from trajectory graphs.
\gls*{bert}, is a deep learning model that applies a self-attention mechanism to learn the contextual relationships between data points in a sequence from both directions.
In NLP, classifier transformer architectures—introduced with \gls*{bert}~\cite{devlin2019bert}—redefined classification via self-attention, and have since been extended to multi-modal data. Analogously, group trajectories can be viewed as structured “sentences” generated by grammar rules, with dependencies forming syntax trees. This motivates using \gls*{gtnn}, which propagates information across parse trees and captures hierarchical dependencies beyond sequence models.

Nodes in the GTNN correspond to grammar production rules, edges encode generation dependencies, and graph-based self-attention highlights influential coalitions and long-range interactions. This extends transformers from token sequences to structured trajectories, enabling principled inference of group intent.

To summarize, we construct a multi-layer modeling and algorithmic framework  between the target tracker (Bayesian signal processing level) and the self-attention mechanism of the \gls*{gtnn} (group intent inference).

\subsection{Related Work}

\subsubsection*{Stochastic Grammar based Trajectory Modelling}

\gls*{nlp} models and associated statistical signal processing algorithms have been previously used in trajectory analysis. Stochastic Context-Free Grammars and Reciprocal Process Models were used in \cite{FK13}, and were extended to more complex metalevel tracking scenarios in \cite{FK14, FK15}. Syntactic tracking using GMTI measurements is studied in  \cite{WKB11, krishnamurthy2018syntactic}. Recent work has also studied embedded stochastic syntactic processes that are equivalent to Markov processes, highlighting new opportunities for grammar-based trajectory inference~\cite{ESSP21}. However, stochastic context-free grammars, while more general than Hidden Markov Models (HMM), are less expressive than \gls*{scsg}. The potential of \gls*{scsg}-based models and associated statistical signal processing for inference of metalevel tracking remains largely unexplored and has been addressed in this paper.

\subsubsection*{Cooperative Game Theory for Group Targets}

Cooperative game theory provides mathematical tools for modeling how groups of agents form coalitions and share payoffs. Central concepts include the \emph{core}, which captures allocations where no coalition has incentive to deviate \cite{peleg2007introduction,osborne1994course}, and the \emph{Shapley value}, which fairly distributes payoffs based on each agent’s average marginal contribution \cite{shapley1953value}. The \emph{nucleolus} uniquely minimizes dissatisfaction among coalitions and lies in the core whenever it is non-empty \cite{schmeidler1969nucleolus}. These solution concepts have been applied in diverse domains such as economics, network design, and multi-agent systems, offering principled ways to model fairness, stability, and cooperation. In aeronautical systems, cooperative-game formulations have been used for UAV handoff decisions and distributed resource/scheduling problems \cite{CG1, CG2}.
In control theory, the notion of \emph{group intent} refers to the collective objectives or emergent behaviors that arise when multiple agents interact locally without centralized coordination, as studied in swarm intelligence models such as Reynolds’ Boids and Vicsek’s flocking \cite{reynolds1987flocks}. While these models emphasize emergent patterns, control-theoretic approaches such as soft control \cite{han2006soft} have sought to deliberately shape group outcomes. 
In contrast, our work utilizes the characteristic function as a fair allocation yielding the probability distribution of the stochastic grammar based generative model for trajectory generation. 

\subsubsection*{Deep Learning Approaches for Trajectory Intent Inference} 
Trajectory classification and intent recognition play a central role in applications such as autonomous driving, surveillance, and airspace monitoring. 
Traditional approaches often rely on statistical models like HMMs or feature-based classifiers~\cite{Kandeepan14, HMM_traj, ParalleleA}, which are limited in capturing long-range or structured behavior. 
Bayesian inference techniques have also been applied to jointly estimate states and predict intent, enabling more robust handling of uncertainty in human and object trajectories~\cite{liang2020intent, Bashar1}.
More recent methods have utilized deep learning to infer agent intent from observed motion~\cite{choi2021drogon , teo2024automatic}. 
In our work, we leverage grammar-aware inference to exploit the structural knowledge encoded in T-structured data. Tree-structured neural networks, such as Tree-LSTMs~\cite{tai2015improved} and recurrent neural network grammars (RNNGs)~\cite{dyer2016recurrent}, have been successfully applied to syntactic parsing and semantic representation, demonstrating stronger generalization compared to flat sequence models. Similarly, \gls*{gnn} have emerged as powerful methods for learning over hierarchical data structures, making them well-suited for modeling parse trees and structured trajectories~\cite{kipf2017semi}. Building on these advances, our approach integrates a grammar-aware neural network to infer the characteristic function, which we interpret as the intent of the group target.

\subsection{Organization and Main Results}
To capture how cooperative game among group targets, we connect the game-theoretic representation of group intent with the sequential dynamics of each target. The overall modeling pipeline is summarized in Equation~\eqref{eq:master}.
Equation~\eqref{eq:master} presents the high-level formulation linking the \emph{group intent} to the targets’ velocity. 
\begin{equation}
\label{eq:master}
\; v 
\;\xrightarrow[\eqref{eq:Nucleolus}]{\mathcal{N}(u)} 
\; \pi^\ast 
\;\xrightarrow[\eqref{eq:norm_prob}]{\mathcal{P}(r)} 
\; \Gamma
\;\xrightarrow[\eqref{eq:statespace}]{\text{trajectory generation}}
\{\vec{v}_k\}
\end{equation}
Here $\utility$ denotes the characteristic function of the cooperative game representing the group intent. The probabilities \( \mathcal{P}(r) \) in the production rule \( \Gamma \)  are derived via the nucleolus allocation \( \pi^\ast\) for each target. The grammar rules then determine the velocity sequence \( \{\vec{v}_k\} \) for all time steps \( k \) which influences the trajectories of the targets. 
During inference, the process is reversed: starting from state estimates \(\{\hat{\vec{v}}_k\}\), we construct a grammar parse tree \(T\), which is then processed by a \gls*{gtnn} to infer the underlying group intent \(\hat{u}\). The organization of this paper is as follows:

\begin{enumerate}
    \item In Sec.~\ref{sec:grammar}, we use stochastic formal grammar to model and parse group trajectories, capturing dependencies beyond traditional Markov chains. 
    \item In Sec.~\ref{sec:game}, we connect cooperative game theory with stochastic formal-grammar in Theorem~\ref{theorem:zero_payoff}, where production-rule probabilities of the grammar serve as allocation vectors within the cooperative game’s core. We further propose a Fisher–information–based characteristic function to characterize this core and prove in Theorem~\ref{theorem:trace_supermodular} that it is modular. The modularity ensuring fair and efficient allocation of trajectory sub-tasks since the Shapley value lies in the core. This formulation provides a principled framework for modeling and inferring group intent.

    \item In Sec.~\ref{sec:classifier}, we examine inference for \DIFdelbegin \DIFdel{SRG and SCFG}\DIFdelend \DIFaddbegin \DIFadd{stochasic regular grammar (SRG) and stochastic context free grammar (SCFG)}\DIFaddend , highlighting their efficiency and limitations, which motivates the use of SCSG. Trajectories are mapped to context-sensitive languages and parsed into grammar trees, which serve as input to a \gls*{gtnn} with graph self-attention, pooling, and dense layers. This design enables end-to-end inference of the characteristic function—interpreted as group intent—directly from trajectories, while leveraging the structural knowledge encoded in the grammar. Our approach thus integrates tracker-level data with graph-based self-attention for group intent inference.
    \item In Sec.~\ref{sec:experiments}, we compare the proposed Fisher-information–based characteristic function with other baselines, demonstrating improvements for better group utility allocation.
    We also evaluate the grammar-aware graph neural network against other baselines, showing improvements over methods that ignore the structural knowledge encoded in the stochastic formal grammar.

\end{enumerate}

\section{Background: Kinematic and Trajectory Models}
\label{sec:grammar}

In this section, we introduce a Bayesian trajectory framework that integrates conventional state-space models for target dynamics with grammar-based representations of motion. The state-space formulation captures the evolution of target kinematics and their noisy observations, while the grammar provides a structured, stochastic mechanism for composing low-level velocities into higher-order geometric and spatiotemporal patterns.
It is important to emphasize that the trajectory models and resulting algorithms discussed in this paper operate seamlessly with the classical target tracking algorithms.

In Sec.~\ref{sec:grammar}–\ref{subsec:grammar_dynamics}, we present a Bayesian trajectory framework that embeds conventional state-space models for target tracking within a grammar-based representation. The grammar composes geometric primitives into high-level spatio-temporal patterns to capture complex motion trajectories.
In Sec.~\ref{sec:grammar}–\ref{subsec:IIB}, we present the proposed metalevel tracking framework, detailing how collective motion trajectories are parsed and structured for integration into a grammar-based inference model that enables higher-level group behavior analysis.
In Sec.~\ref{sec:grammar}–\ref{subsec:grammar_modeling}, we introduce the foundations of formal grammar theory and analyze the expressive capabilities of different grammar classes for intent modeling. This section constitutes the background for Sec.~\ref{sec:game} where we will formulate {\em group intent}  as a cooperative game, and the outcome of the game  will modulate the probabilities of the grammar representation and thus the trajectories.

\subsection{Target Dynamics and Observation Model}
\label{subsec:grammar_dynamics}
In classical target tracking radars\cite{bar2004estimation,blackman1999design},  the  kinematic state of a target (position  and velocity) at time \( k \) is represented by \( \state_k =[p_k^{1}, p_k^{2}, v_{k}^{1}, v_{k}^{2} ]^{\top} \) where $p_{k}^{i}$ is the target's position in the $i^{\text{th}}$ dimension at time $k$, and its observation is given by \( \observation_k \in \mathbb{R}^4 \). We assume that the state specifies the target's position and velocity in a 2-dimensional space. The state evolves as 
\begin{equation}
\label{eq:statespace}
\state_{k+1} = \mathbf{F} \state_k + \mathbf{G} \control_k+ \statenoise_k,
\end{equation}
where $\mathbf{F}$ and $\mathbf{G}$ matrices are defined in \cite[Ch.\,2.6]{Krishnamurthy_2016}. The \emph{i.i.d.} process noise  is denoted as \( \statenoise_k \sim\mathcal{N}(0,\mathbf{Q}) \), $\mathbf{Q}$ defined in \cite[Ch.\,2.6]{Krishnamurthy_2016}. 
Although the acceleration vector $\control_k$ ultimately determines the trajectory, in our
syntactic layer we encode the trajectory via the velocity
sequence $v_k=[\dot{p}_{k}^{1}, \dot{p}_{k}^{2}]^{\top}$, where $\dot{p}_{k}^{i}$ is the velocity in the $i^{\text{th}}$ dimension at time $k$. The velocity sequence $v_k$ is determined by the production rules defined in the next subsection and summarized in \eqref{eq:master}. 
The observation at time \( k \) recorded by the radar  is 
\begin{equation}
\label{eq:observe}
\observation_k = \state_k+ \observationNoise_k,
\end{equation}
where \( \observationNoise_k \sim \mathcal{N}(0,\boldsymbol{\Sigma}) \) is the \emph{i.i.d.} measurement noise \cite[Ch.\,2.6]{Krishnamurthy_2016}.\\ 
{\bf Target Tracker}.
A Bayesian tracker computes a sequence of (possibly approximate) posterior distributions $\{\Pi_k\}$ as 
\begin{equation}
\label{eq:bayes}
\Pi_{k+1} = \mathcal{B}(\Pi_k, \observation_k).
\end{equation}
The Bayesian recursion in~\eqref{eq:bayes} updates the posterior $\Pi_k$ over the target state using the incoming observation $\observation_k$, thereby integrating prior information, process dynamics, and measurement likelihoods to produce the updated posterior $\Pi_{k+1}$. 
In practice, $\mathcal{B}$ may represent either the optimal Bayesian filter or an approximate inference scheme such as a particle filter or interacting multiple model (IMM) algorithm~\cite{doucet2001sequential, IMM}. 
The state estimate at time $k+1$ is then extracted from the posterior through a suitable estimator function as

\begin{equation}
\label{eq:track_est}
\stateinf_{k+1} = \Phi(\Pi_{k+1}),
\end{equation}
where \( \Phi(\cdot) \) denotes the state estimation function that yields the most likely state from the updated posterior. Following a similar Bayesian update process, the multi-target tracker estimates the joint state $\hat{\mathbf{X}}_k^{N}$ comprising the individual target estimates $\hat{\state}_k^{n}$ for $n = 1, \ldots, N$. 
In the multi-target case with \( N \) targets, the overall system state at time \( k \) is represented as
\begin{equation}
\label{eq:MultiTarget}
    \mathbf{\hat X}_k^{N} = \{[\hat{{p}}_{1k}^{1}, \hat{{p}}_{1k}^{2}], \ldots, [\hat{{p}}_{Nk}^{1}, \hat{{p}}_{Nk}^{2}],[\hat{{v}}_{1k}^{1}, \hat{{v}}_{1k}^{2}], \ldots, [\hat{{v}}_{Nk}^{1},\hat{{v}}_{Nk}^{2}]\},
\end{equation}
where $\hat{p}_{jk}^{i}$ and $\hat{v}_{jk}^{i}$ are the positions and velocities of the $j=1,2,\cdots, N^{\mathrm{th}}$ target, in the $i^{\mathrm{th}}$ dimension at $k=0,1,2,\cdots$ time step.
Here, $\mathbf{\hat X}_k^{N}$ represents the unlabeled finite set of $N$ target states at time $k$, such that the ordering of elements is immaterial and no data association is imposed. By design of the meta-level tracker to infer the group intent defined in the next section no data-association is required. 


\begin{figure}[h!]
    \centering
    \includegraphics[width=0.5\linewidth]{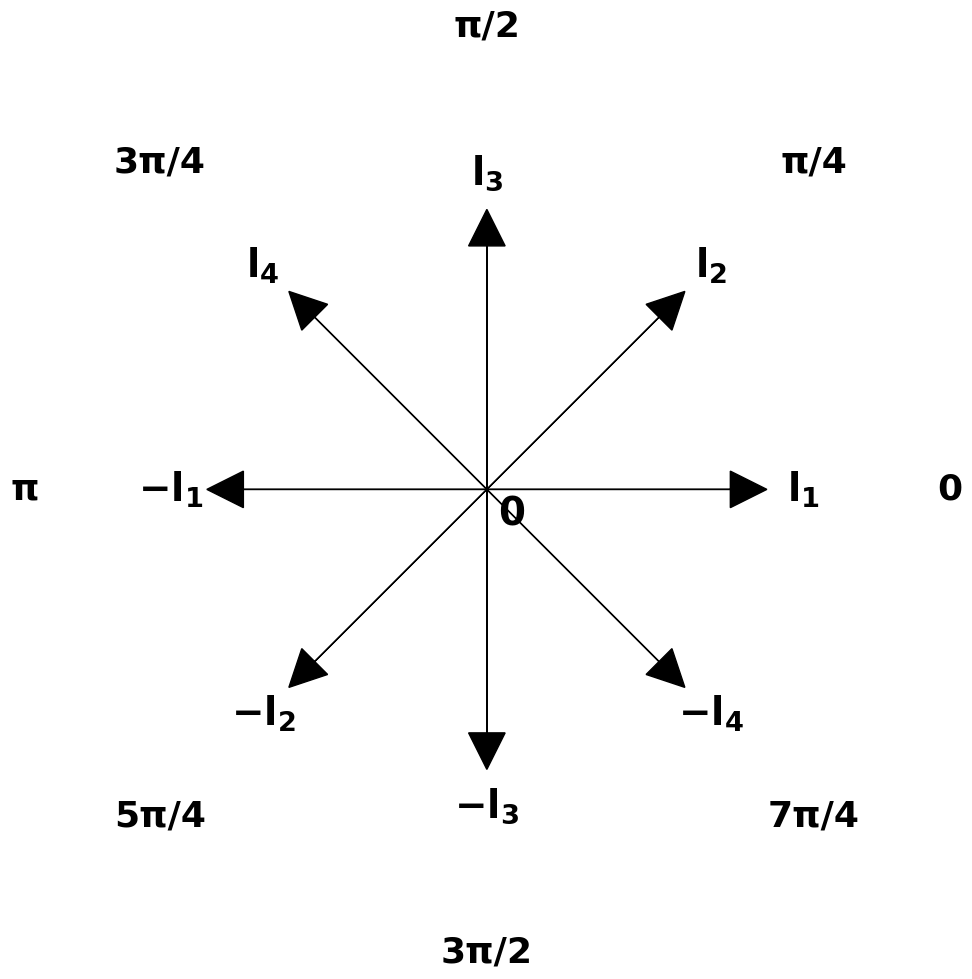}
    \caption{Illustration of the normalized velocity vectors
    $[v_{k}^{1},v_{k}^{2}] \in \{0, l_1, l_2, l_3, l_4, -l_1, -l_2, -l_3, -l_4 \}$  used in the kinematic description~\eqref{eq:statespace}.}
    \label{fig:velocity}
\end{figure}

\subsection{Metalevel Tracker}
\label{subsec:IIB}

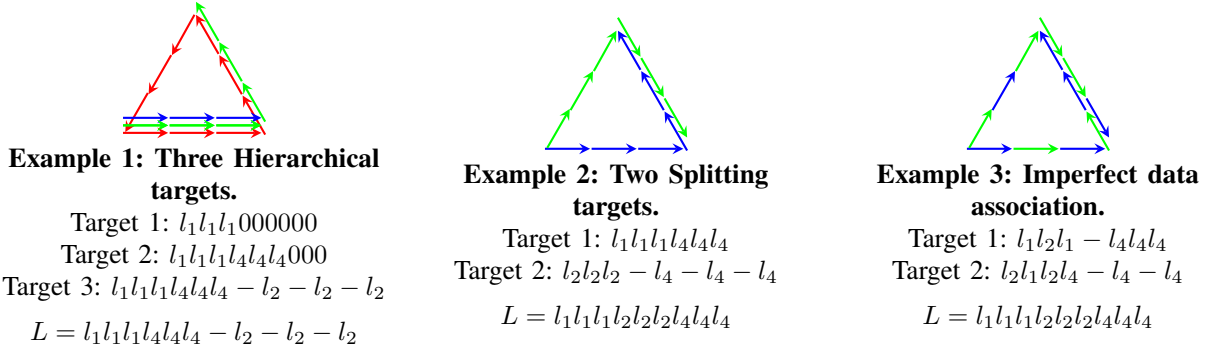
\begin{figure*}[t!]
\centering
\begin{tabular}{ccc}

\begin{minipage}{0.3\textwidth}
\centering
\begin{tikzpicture}[>=stealth, thick, scale=1.0]

\draw[->, red] (0.2,0) -- (0.8,0);
\draw[->, red] (0.82, 0) -- (1.42,0);
\draw[->, red] (1.44, 0) -- (2.04,0);

\draw[->, red] (2.09,0.15-0.173) -- (1.79 ,0.67-0.173);
\draw[->, red] (1.78 ,0.687-0.173) -- (1.48 ,1.207-0.173);
\draw[->, red] (1.47, 1.207+0.017-0.173) -- (1.17 ,1.207+0.017 + 0.52-0.173);

\draw[->, red] (1.17 -0.02 , 1.207+0.017 + 0.52-0.173) -- (1.17 -0.32  ,1.207+0.017 + 0.52-0.173 - 0.31*1.732);
\draw[->, red] (1.17 -0.32 - 0.01  , 1.207+0.017 + 0.52-0.173 - 0.31*1.732) -- (1.17 -0.32 - 0.01 - 0.3 ,1.207+0.017 + 0.52-0.173 - 0.31*1.732 - 0.52);
\draw[->, red] (1.17 -0.32 - 0.02 - 0.3 ,1.207+0.017*2 + 0.52-0.173 - 0.31*1.732 - 0.52 ) -- (1.17 -0.32 - 0.02 - 0.3*2 ,1.207+0.017*2 + 0.52-0.173 - 0.31*1.732 - 0.52*2);

\draw[->, blue] (0.2,0.2) -- (0.8,0.2);
\draw[->, blue] (0.82, 0.2) -- (1.42,0.2);
\draw[->, blue] (1.44, 0.2) -- (2.04,0.2);

\draw[->, green] (0.2,0.1) -- (0.8,0.1);
\draw[->, green] (0.82, 0.1) -- (1.42,0.1);
\draw[->, green] (1.44, 0.1) -- (2.04,0.1);

\draw[->, green] (0.2,0.1) -- (0.8,0.1);
\draw[->, green] (0.82, 0.1) -- (1.42,0.1);
\draw[->, green] (1.44, 0.1) -- (2.04,0.1);

\draw[->, green] (2.09,0.15) -- (1.79 ,0.67);
\draw[->, green] (1.78 ,0.687) -- (1.48 ,1.207);
\draw[->, green] (1.47, 1.207+0.017) -- (1.17 ,1.207+0.017 + 0.52);

\end{tikzpicture}

\textbf{Example 1: Three Hierarchical targets.}\\
Target 1: $l_1 l_1 l_1 000000$\\
Target 2: $l_1 l_1 l_1 l_4 l_4 l_4 000$\\
Target 3: $l_1 l_1 l_1 l_4 l_4 l_4 -l_2 -l_2 -l_2$
\[
L = l_1 l_1 l_1 l_4 l_4 l_4 -l_2 -l_2 -l_2
\]
\end{minipage}
&
\begin{minipage}{0.3\textwidth}
\centering
\begin{tikzpicture}[>=stealth, thick, scale=1.0]

\draw[->, blue] (0.2+2.5,0) -- (0.8+2.5,0);
\draw[->, blue] (0.82+2.5, 0) -- (1.42+2.5,0);
\draw[->, blue] (1.44+2.5, 0) -- (2.04+2.5,0);

\draw[->, blue] (2.09+2.5,0.15-0.173) -- (1.79+2.5 ,0.67-0.173);
\draw[->, blue] (1.78+2.5 ,0.687-0.173) -- (1.48+2.5 ,1.207-0.173);
\draw[->, blue] (1.47+2.5, 1.207+0.017-0.173) -- (1.17+2.5 ,1.207+0.017 + 0.52-0.173);

\draw[->, green]   (1.17 -0.32+2.5  ,1.207+0.017 + 0.52-0.173 - 0.31*1.732 ) -- (1.17 -0.02+2.5 , 1.207+0.017 + 0.52-0.173);
\draw[->, green]  (1.17 -0.32 - 0.01 - 0.3 +2.5 ,1.207+0.017 + 0.52-0.173 - 0.31*1.732 - 0.52) -- (1.17 -0.32 - 0.01+2.5  , 1.207+0.017 + 0.52-0.173 - 0.31*1.732);
\draw[->, green] (1.17 -0.32 - 0.02 - 0.3*2 +2.5 ,1.207+0.017*2 + 0.52-0.173 - 0.31*1.732 - 0.52*2) -- (1.17 -0.32 - 0.02 - 0.3 +2.5 ,1.207+0.017*2 + 0.52-0.173 - 0.31*1.732 - 0.52 );

\draw[->, green]   (1.79+2.5 ,0.67 ) -- (2.09+2.5,0.15);
\draw[->, green] (1.48+2.5 ,1.207)  --  (1.78+2.5 ,0.687);
\draw[->, green]  (1.17+2.5 ,1.207+0.017 + 0.52)  -- (1.47+2.5, 1.207+0.017);

\end{tikzpicture}

\textbf{Example 2: Two Splitting targets.}\\
Target 1: $l_1 l_1 l_1 l_4 l_4 l_4$\\
Target 2: $l_2 l_2 l_2 -l_4 -l_4 -l_4$
\[
L = l_1 l_1 l_1 l_2 l_2 l_2 l_4 l_4 l_4
\]
\end{minipage}
&
\begin{minipage}{0.3\textwidth}
\centering
\begin{tikzpicture}[>=stealth, thick, scale=1.0]

\draw[->, blue] (0.2+5,0) -- (0.8+5,0);
\draw[->, green] (0.82+5, 0) -- (1.42+5,0);
\draw[->, blue] (1.44+5, 0) -- (2.04+5,0);

\draw[->, green] (2.09+5,0.15-0.173) -- (1.79+5 ,0.67-0.173);
\draw[->, blue] (1.78+5 ,0.687-0.173) -- (1.48+5 ,1.207-0.173);
\draw[->, blue] (1.47+5, 1.207+0.017-0.173) -- (1.17+5 ,1.207+0.017 + 0.52-0.173);

\draw[->, green]   (1.17 -0.32+5  ,1.207+0.017 + 0.52-0.173 - 0.31*1.732 ) -- (1.17 -0.02+5 , 1.207+0.017 + 0.52-0.173);
\draw[->, blue]  (1.17 -0.32 - 0.01 - 0.3 +5 ,1.207+0.017 + 0.52-0.173 - 0.31*1.732 - 0.52) -- (1.17 -0.32 - 0.01+5  , 1.207+0.017 + 0.52-0.173 - 0.31*1.732);
\draw[->, green] (1.17 -0.32 - 0.02 - 0.3*2 +5 ,1.207+0.017*2 + 0.52-0.173 - 0.31*1.732 - 0.52*2) -- (1.17 -0.32 - 0.02 - 0.3 +5 ,1.207+0.017*2 + 0.52-0.173 - 0.31*1.732 - 0.52 );

\draw[->, blue]   (1.79+5 ,0.67 ) -- (2.09+5,0.15);
\draw[->, green] (1.48+5 ,1.207)  --  (1.78+5 ,0.687);
\draw[->, green]  (1.17+5 ,1.207+0.017 + 0.52)  -- (1.47+5, 1.207+0.017);

\end{tikzpicture}

\textbf{Example 3: Imperfect data association.}\\
Target 1: $l_1 l_2 l_1 -l_4 l_4 l_4$\\
Target 2: $l_2 l_1 l_2 l_4 -l_4 -l_4$
\[
L = l_1 l_1 l_1 l_2 l_2 l_2 l_4 l_4 l_4
\]
\end{minipage}

\end{tabular}

\caption{Three examples showing hierarchical, splitting, and imperfect data-association behaviors, visualized via arrow-based motion diagrams and their resulting grammar transformations.}
\label{fig:three_examples}
\end{figure*}

Based on the estimated state sequences $\mathbf{\hat X}_{k}^{N}$ in \eqref{eq:MultiTarget}, we introduce a \emph{metalevel tracking} framework that operates above the traditional kinematic layer. Unlike conventional trackers that estimate individual target trajectories, the metalevel tracker analyzes the collective motion behavior of multiple targets to infer higher-level group intent. 

To achieve this, we employ a parsing procedure that transforms the collective velocity trajectories of all targets into a single symbolic representation. This symbolic sequence serves as the input to a formal grammar-based inference mechanism for group-level behavior analysis. Each velocity vector $[\hat{{v}}_{jk}^{1},\hat{{v}}_{jk}^{2}]$ in \eqref{eq:MultiTarget} is quantized according to Fig.~\ref{fig:velocity}, relative to its nearest representative vector in the predefined quantization set. The resulting symbolic representation is then obtained through the following parsing procedure:

\begin{itemize}
    \item Velocities corresponding to spatially overlapping positions of different targets are ignored.
    \item Zero velocities, i.e., $[\hat{{v}}_{jk}^{1},\hat{{v}}_{jk}^{2}] = [0,0]$, are excluded from further processing.
    \item Consecutive velocity vectors with the same or opposite directions are treated as part of the same motion pattern.
    \item Distinct velocity directions observed over time are recorded sequentially to form the symbolic representation of collective motion.
\end{itemize}

To illustrate how the above procedure operates in practice, we provide a set of representative examples, as shown in Fig.~\ref{fig:three_examples}. These examples demonstrate how multiple individual target motion sequences are progressively merged into a unified symbolic representation. The transformation highlights the model’s ability to capture the collective dynamics of a group, eliminate redundancies, and preserve the temporal and spatial coherence of motion patterns across targets.

These examples demonstrate the core principle of the metalevel tracking framework: group target trajectories, once parsed and symbolically transformed, can be collectively represented through a unified sequence. This unified representation serves as the foundation for the next stage of analysis, where motion patterns are encoded using a stochastic formal grammar to infer group-level intent and coordinated behaviors. The following section describes this grammar-based inference process in detail.



\begin{figure*}[h!]
\centering
\resizebox{0.97\textwidth}{!}{%
\begin{tikzpicture}[
    node distance=1.1cm and 1.1 cm,
    box/.style={rectangle, rounded corners=3pt, thick, draw, minimum width=3cm, minimum height=1cm, align=center},
    ->, >=Stealth
]

\node[box] (u1) {Group Target Intent  \\ as Characteristic Function};
\node[above=1.1cm of u1.west, anchor=west] (subtitle1) {\normalsize \textbf{Group Trajectory Generative Model Formulation}};

\node[box, right=of u1] (n1) {Payoff Allocation of \\ Cooperative Game};
\node[box, right=of n1] (prob1) {Production Rule \\ Probabilities of  \\ Stochastic Formal Grammar};
\node[box, right=of prob1] (traj1) {Coordinated  \\ Group Trajectory};

\node[box, below=1.5cm of traj1] (traj2) {Observed \\ Group Trajectory};
\node[box, left=2.9cm of traj2] (prob2) {Grammar Parse Tree \\ of Stochastic Formal Grammar};
\node[box, left=2.95cm of prob2] (n2) {Inferred Group Intent \\ as Characteristic Function \\ of the Game};
\node[above=1.1cm of n2.west, anchor=west] (subtitle2) {\normalsize \textbf{Group Intent Inference Formulation}};

\draw[->] (u1) -- (n1);
\draw[->] (n1) -- (prob1);
\draw[->] (prob1) -- (traj1);

\draw[->] (traj2) -- (prob2);
\draw[->] (prob2) -- (n2);

\draw[dashed, ->] (traj1) -- (traj2);

\end{tikzpicture}%
}

\caption{Overview of the proposed framework. The top line shows the group trajectory generative model formulation: Group intent is modeled as the characteristic function of a cooperative game; allocations based on characteristic function induce probabilities over production rules in a stochastic formal grammar, which generate coordinated group trajectories. The bottom line shows the group intent inference formulation: From observed group trajectories, a grammar parse tree is constructed and used to recover the group intent by estimating the characteristic function.}
\label{fig:idea}
\end{figure*}
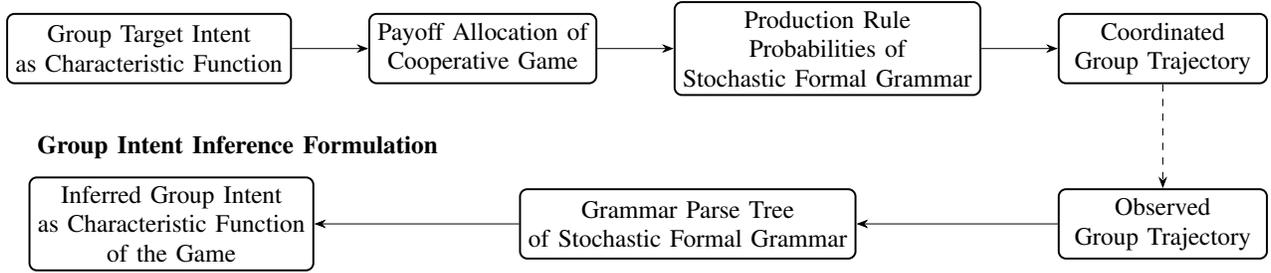


\subsection{Geometric Shape-Based Intent Modeling Using Stochastic Grammars}
\label{subsec:grammar_modeling}
This section demonstrates how the intent of a group of targets can be modeled by framing the estimated track sequence $\{\hat{\vec{v}}_k\}:=\{[v_{jk}^{1},v_{jk}^{2}]\}$ in \eqref{eq:MultiTarget} into geometric shapes. We focus on a higher level of abstraction, involving meta-level tracking. The key idea is to use a Stochastic Grammar from Formal Language Theory to fit geometric shapes as foundational elements for trajectory modeling to the sequence of track estimates  $\{\hat{\vec{v}}_k\}$  generated above. 

\noindent{\bf Stochastic Grammars}:  
A stochastic grammar is defined as $G = (\mathcal{E}, \mathcal{A}, \Gamma, \mathcal{P})$, where $\mathcal{E}$ is the set of non-terminal symbols, $\mathcal{A}$ is the set of terminal symbols, $\Gamma$ is the set of production rules, and $\mathcal{P}$ is a probabilistic map defining the distribution over the production rules $\Gamma$. These grammars are stochastic because each non-terminal has multiple production rules, with the selection made randomly. As summarized in \eqref{eq:master}, the $\Gamma$ influences the velocity $\{\vec{v}_k\}$ in \eqref{eq:statespace}. 
Specifically, each terminal in $\mathcal{A}$ represents a directional velocity in the trajectory. 
For example, as shown in Fig.~\ref{fig:velocity}, each vector represents a directional velocity governed by the velocity term ${\vec{v}_k}$, with the set of terminal symbols defined as $\mathcal{A} = \{0, l_1, l_2, l_3, l_4, -l_1, -l_2, -l_3, -l_4\}$. In this way, each velocity can be mapped one-to-one to a terminal in the formal grammar $G$, constructing the bridge between target dynamics and stochastic formal grammar theory.

\noindent{\bf Generative Models for Trajectories}. By a generative model for a trajectory, we mean a grammar that can exclusively generate these  trajectories. This is typically verified using pumping lemmas \cite{Pumping}. Different types of grammars are suited to modeling different classes of trajectories:

\noindent{\em Stochastic Regular Grammars (SRGs)}, which are equivalent to finite-state Markov chains or Hidden Markov Models, are efficient for modeling linear trajectories but are fundamentally limited by the first-order Markov assumption. This means they cannot capture long-range dependencies, enforce loop closures, or represent branching decisions, making them suitable only for simple, sequential paths without hierarchy.
\\
{\em Stochastic Context-Free Grammars (SCFGs)} introduce hierarchical structure, allowing recursive and nested relationships that go beyond the flat, sequential patterns of SRGs. This makes it possible to capture long-range dependencies, represent repeated substructures. As a result, SCFGs can express more complex movement patterns such as m-rectangles or arcs while preserving the probabilistic framework needed for uncertainty in trajectory modeling. \\
{\em Stochastic Context-Sensitive Grammars (SCSGs)} provide the power to model context-dependent trajectories, closed-loop behaviors such as triangles, rectangles, or squares by encoding dependencies across distant segments. However, Bayesian inference in such grammar is NP-hard, making them computationally prohibitive; therefore, we instead employ deep neural networks to approximate these capabilities in a tractable manner.

By selecting the appropriate generative grammar, we can effectively model a broad spectrum of trajectory behaviors, each with varying levels of structural and contextual complexity. 
We utilize grammar $G$ for trajectory modeling and analysis. Let $\alphabet $ be a finite set of terminals, where each element within represents a unit velocity of the target in a specific  direction. Recall that the velocity $\vec{v}_k$ defined in \eqref{eq:statespace}. We define $\mathcal{E}$ to be a finite set of non-terminals, where each element within represents an intermediate state of the target. For further details on the production rules $\Gamma$ for SRGs, SCFGs, and SCSGs refer to \cite{Fu:1982:SyntactPatternRecognition}. Given a starting non-terminal in $\mathcal{E}$, guided by stochastic production rules $\Gamma$ and distribution over production rules $\mathcal{P}$, a string $\{l_i\}_{i=1}^{T}$ of grammar $G$ can be generated. 

In Sec.\ref{sec:experiments}, to model realistic deviations arising from both the target dynamics and the observation process in \eqref{eq:statespace} and~\eqref{eq:observe}, we incorporate noise directly into the grammar 
$G$. Specifically, we extend the production mechanism by introducing a noise terminal within the rule set, allowing random perturbations to emerge during generation (see \eqref{eq:Perturbation}). This extension yields a grammar that captures not only structured trajectory patterns but also the noisy behaviors commonly encountered in tracking scenarios.

\section{Modeling Group Intent as Allocations in a Cooperative Game}
\label{sec:game}

We are now ready to present our first main result. 
We present our formulation of group intent as the characteristic function of a cooperative game played by a group of targets. The overview of our framework is shown in Fig.~\ref{fig:idea}. The goal is to bridge low-level trajectory tracking with high-level intent inference by integrating Bayesian state-space modeling, stochastic formal grammars, and cooperative game theory.

Inferring group intent requires more than analyzing individuals in isolation. 
While non-cooperative or purely statistical approaches can model independent behaviors or correlations, they struggle to explain how group targets coordinate toward shared objectives. 
A cooperative game–theoretic formulation is particularly well-suited for this challenge: it explicitly captures coalition formation, the creation of joint value, and principled allocation rules that ensure fairness or stability. 
This provides a rigorous connection between motion-level trajectories and high-level intent, making cooperative games as a particularly suitable framework.

In Sec.~\ref{sec:game}-\ref{subsec:cooperative_game}, we model group intent via the \emph{characteristic function} of a cooperative game~\eqref{eq:v}, with the allocation~\eqref{eq:Nucleolus} prescribing sub-task distribution among targets. This allocation directly determines production rule probabilities in the stochastic grammar~\eqref{eq:norm_prob}, embedding both motion syntax and rational coordination. Theorem~\ref{theorem:zero_payoff} proves the formulation’s effectiveness.
To make this connection concrete, we focus on three canonical cooperative-game allocations:
\begin{enumerate}
    \item \textbf{Core:} Ensures that no coalition of agents has an incentive to deviate, guaranteeing stability of the allocation.
    \item \textbf{Nucleolus:} Minimizes coalition dissatisfaction by lexicographically minimizing excess payoffs, yielding a balanced and robust allocation.
    \item \textbf{Shapley Value:} Provides an equitable distribution of the group’s total value by averaging marginal contributions across all coalition orderings.
\end{enumerate}

In Sec.~\ref{sec:game}–\ref{subsec:fisher}, as a concrete example, we introduce a Fisher-information–based characteristic function \eqref{eq:fisher} and demonstrate its effectiveness in intent modeling using Theorem~\ref{theorem:zero_payoff}.

\subsection{Characteristic Function of the Cooperative Game as the Intent of the Trajectory}
\label{subsec:cooperative_game}


Given a group of $N$ players (total number of targets), we consider a joint coalition vector $S \in \{0,1\}^N$, where $S_i = 1$ indicates that player $i$ participates the game (i.e., contributes a sub-trajectory to the joint group trajectory), and $S_i = 0$ otherwise. The configuration $S$ thus specifies a coalition of the game. 
We model the cooperative game with a characteristic function:
\begin{equation}
\label{eq:v}
    \utility: 2^N \to \mathbb{R},
\end{equation}
where \( \utility(S) \) quantifies the achievable utility when only the members of coalition $S$ contribute their trajectories.
Given characteristic function $\utility(S)$, the \emph{core} of the game is defined as
\DIFaddbegin {\small\DIFaddend \begin{equation}
\label{eq:Core}
C(\utility) = \left\{ \pi \in \mathbb{R}_{+}^{N} \,\middle|\, \sum_{i \in N} \pi_i = u(N), \sum_{i \in S} \pi_i \geq \utility(S), \forall S \subseteq N \right\}.
\end{equation}\DIFaddbegin }
\DIFaddend The core provides a set of stable allocations where no coalition has an incentive to deviate and form on its own. 
The next question is how can we choose one single point within the core $C(\utility)$. 
One important solution is the concept of \emph{nucleolus}, which is an optimal point within the core proved in~\cite{schmeidler1969nucleolus}. 
For an allocation $\pi \in \mathbb{R}^N$ and coalition $S \subseteq N$, the \emph{excess} of $S$ with respect to $\pi$ is defined as
\begin{equation}
    e(S,\pi) := \utility(S) - \sum_{i \in S} \pi_i.
\end{equation}
The \emph{vector of excesses} under allocation $\pi$ is given by
\begin{equation}
    \theta(\pi) = \big( e(S_1,\pi), e(S_2,\pi), \dots, e(S_{2^N},\pi) \big),
\end{equation}
where the excesses are arranged in non-increasing order (i.e.,
$
    e(S_m,\pi) \geq e(S_n,\pi) \quad \text{for } m \leq n
$).
Finally, for a cooperative game $\mathcal{CG}(N,u)$, the \emph{nucleolus} is defined as
\begin{equation}
\label{eq:Nucleolus}
\mathcal{N}(\utility) = \left\{ \pi^\ast \in \mathbb{R}_{+}^{N} \,\middle|\, \rho(\pi^\ast) \preceq_{\text{lex}} \rho(\pi), \forall \pi \in \mathbb{R}_{+}^{N} \right\},
\end{equation}
where $\preceq_{\text{lex}}$ denotes the lexicographic order: two vectors $\rho(\pi^\ast)$ and $\rho(\pi)$ are compared by looking at their first components; if they are equal, the comparison moves to the second component, and so on.
Given a cooperative game $\mathcal{CG}(N,u)$, the nucleolus $\mathcal{N}(u)$ is a single point within the core so that the payoff allocation $\pi^\ast$ in which total utility $v(N)$ is fully distributed among players, and no coalition $S$ can obtain a higher collective payoff by breaking away from the grand coalition. 

Another important point solution concept in cooperative game theory is the \emph{Shapley value}~\cite{shapley1953value}, which provides a fair allocation of the total utility based on each player’s marginal contribution across all possible coalitions. For a cooperative game $\mathcal{CG}(N,u)$, the Shapley value is defined as
\begin{align}
\label{eq:Shapley}
\phi(u) = \bigg\{ \pi^* \in \mathbb{R}_{+}^{N} \,\bigg|\,
\pi^*_i &= \sum_{S \subseteq N \setminus \{i\}} 
\frac{|S|!(|N|-|S|-1)!}{|N|!} \nonumber \\
&\quad \cdot \big( \utility(S \cup \{i\}) - \utility(S) \big), \,
\forall i \in N \bigg\}.
\end{align}
The Shapley value assigns to each player $i$ their expected marginal contribution when they join coalitions in all possible orders, ensuring a fair and symmetric division of the total payoff. Moreover, in convex cooperative games\cite{shapley1953value}, the Shapley value always lies within the core, further reinforcing its stability as a solution concept.
This formulation not only ensures theoretical rigor but also offers practical utility for real-world multi-target systems where stability and fairness are paramount.

Based on the allocation obtained from \emph{core}, \emph{nucleolus}, or \emph{Shapley value}, we construct a \emph{probabilistic map} $\mathcal{P}$ that assigns probabilities to the production rules $\Gamma$. For each production rule $r$, let $I_r$ be the set of targets assigned to that rule. The numerator of $\mathcal{P}(r)$ is the sum of the payoffs $\pi_i$ for all $i \in I_r$, representing the contribution allocated to rule $i$. To normalize, we define $\mathcal{Q}(r)$ as the set of all production rules sharing the same left-hand side as $i$. For each rule $r \in \mathcal{Q}(r )$, we sum the payoffs of its assigned targets, and then sum across all such rules. The probability is therefore
\begin{equation}
\label{eq:norm_prob}
    \mathcal{P}(i) = \frac{\sum_{j \in I_i} \pi^\ast_j}{\sum_{k \in \mathcal{Q}(i)} \sum_{j \in I_k} \pi^\ast_j}.
\end{equation}
This construction ensures that the probabilities of all rules with the same left-hand side sum to one, as required for a stochastic grammar.

\begin{figure*}[htbp]
\centering
\fbox{
\begin{minipage}[t]{0.43\textwidth}
    \section*{Grammar Rules}
    \vspace{-1mm}
    \[
    \begin{array}{rl}  
    I. & S \xrightarrow{p_{1}} d S \\
    II. & S \xrightarrow{p_{2}} D \\
    III. & D \xrightarrow{p_{3}} d \\
    IV. & S \xrightarrow{p_4} d SB \\
    V. & S \xrightarrow{p_5} d B \\
    VI. & B \xrightarrow{p_6} b \\
    VII. & S \xrightarrow{p_{7}} d SBC \\
    VIII. & S \xrightarrow{p_{8}} d BC \\
    IX. & dB \xrightarrow{p_{9}} d b \\
    X. & bC \xrightarrow{p_{10}} bc \\
    XI. & CB \xrightarrow{p_{11}} BC \\
    XII. & cC \xrightarrow{p_{12}} cc \\
    \end{array}
    \]

\vspace{-4mm}
\section*{Examples of Trajectories for Different Grammar}
\vspace{-4mm}

\begin{center}
\begin{tikzpicture}[>=Stealth, scale=1]

\begin{scope}[shift={(-1,0)}]
\node (A1) at (0,0) {};
\node (AB1) at (1,0) {};
\node (B1) at (2,0) {};
\node (BC1) at (1.5,0.9) {};
\node (C1) at (1,1.8) {};
\node (AC1) at (0.5,0.9) {};
\draw[->] (A1) -- (AB1);
\draw[->] (AB1) -- (B1);
\draw[->] (B1) -- (BC1);
\draw[->] (BC1) -- (C1);
\draw[->] (C1) -- (AC1);
\draw[->] (AC1) -- (A1);
\node at (1,-0.8) {$d^n b^n c^n$ };
\node at (1,-1.3) {(SCSG)};
\end{scope}

\begin{scope}[shift={(-3.5,0)}]
\node (A1) at (0,0) {};
\node (AB1) at (1,0) {};
\node (B1) at (2,0) {};
\node (BC1) at (1.5,0.9) {};
\node (C1) at (1,1.8) {};
\node (AC1) at (0.5,0.9) {};
\node (C2) at (0.5,2.7) {};
\draw[->] (A1) -- (AB1);
\draw[->] (AB1) -- (B1);
\draw[->] (B1) -- (BC1);
\draw[->] (BC1) -- (C1);
\node at (1,-0.8) {$d^n b^n$};
\node at (1,-1.3) {(SCFG)};
\end{scope}

\begin{scope}[shift={(-6,0)}]
\node (A1) at (0,0) {};
\node (AB1) at (1,0) {};
\node (B1) at (2,0) {};
\draw[->] (A1) -- (AB1);
\draw[->] (AB1) -- (B1);
\node at (1,-0.8) {$d^n $};
\node at (1,-1.3) {(SRG)};
\end{scope}

    \end{tikzpicture}
    \end{center}
\end{minipage}
\hfill
\begin{minipage}[t]{0.53\textwidth}
    \section*{Example of Trajectory Derivations}
    \vspace{-1mm}
    \[
    \begin{array}{rll}
    & dd & (I-III)\\
    & ddd & (I-III)\\
    & ddddd & (I-III)\\
    & ddbb & (III-VI)\\
    & dddbbb & (III-VI)\\
    & dddddbbb & (I-VI)\\
    & ddbbcc  & (VII-XII)\\
    & dddbbbcc  & (IV-XII)\\
    & dddddbbbcc & (I-XII)\\
    \end{array}
    \]

    \section*{Choice of Characteristic Functions in Cooperative Game}
    Consider we have three players (targets), player 1 is assigned with production rules I-III; player 2 is assigned with IV-VI; player 3 is assigned with production rules VII-XII. 
    \begin{enumerate}
        \item If we only want to surveillance the environment in a line shape, then simply the participation of player 1 in the game is enough, which leads to a stochastic regular grammar (SRG).
        \item If we only want to surveillance the environment in a corner shape, then the participation of both player 1 and player 2 in the game should be enough, which leads to a stochastic context-free grammar (SCFG).
        \item If we want to surveillance the environment in a triangular shape, then the participation of three players should be enough, which leads to a stochastic context-sensitive grammar (SCSG).
    \end{enumerate}
\end{minipage}
}
\caption{Relationship between grammar rules, example trajectory derivations, and characteristic functions for group intent in the cooperative game framework.
\textbf{Left:} Grammar production rules—SRG, SCFG, and SCSG—shown with their generative limitations.
\textbf{Right:} Example derivations and how characteristic functions select target subsets to realize specific group intents.}
\label{fig:grammar}
\end{figure*}
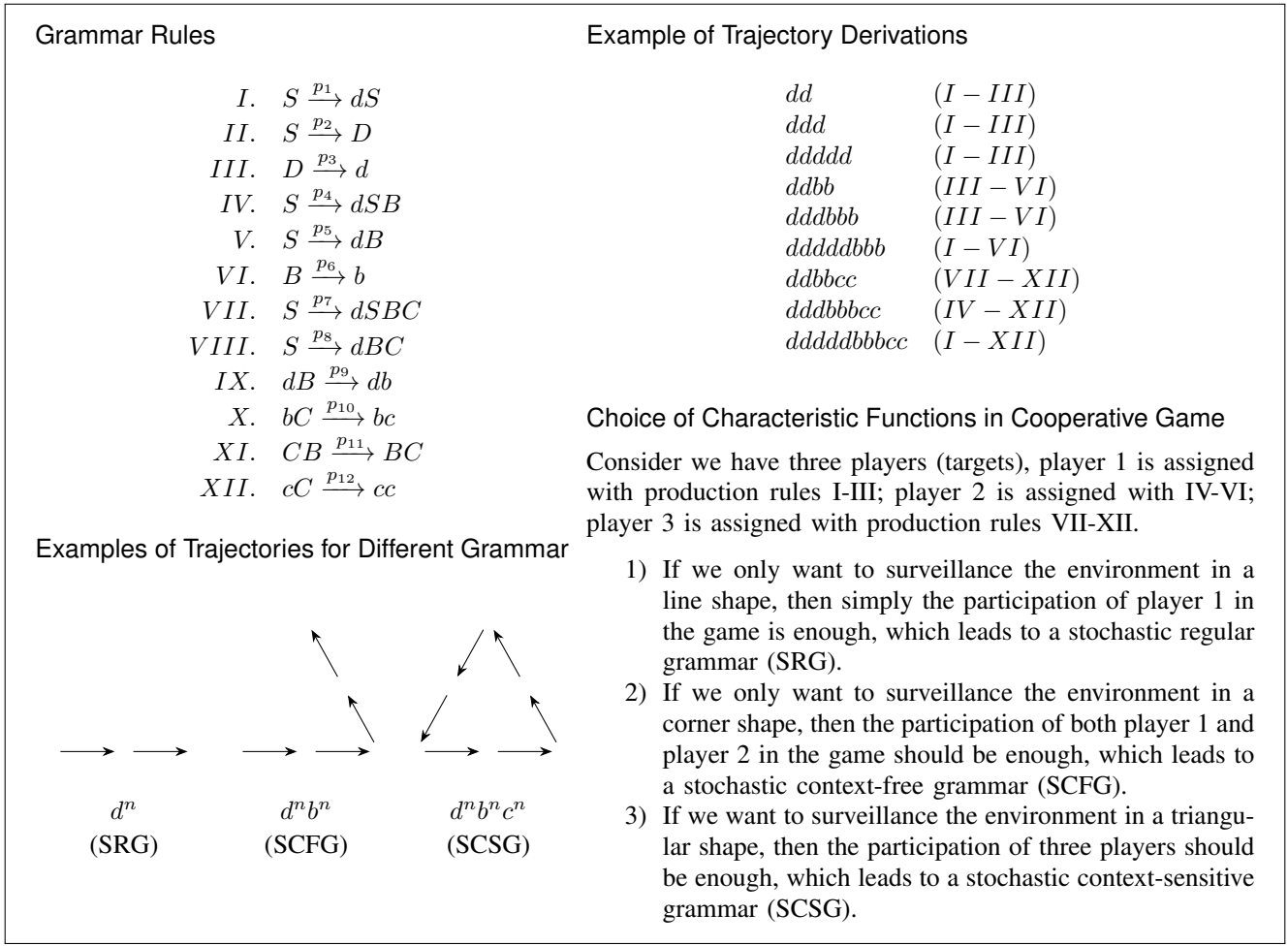

We connect the characteristic function to formal grammars, aiming for maximizers that correspond exactly to strings generated by \( G = (\mathcal{E}, \mathcal{A}, \Gamma, \mathcal{P}) \). A sufficient condition is  
\begin{equation}
\label{eq:v_limitation}
\utility(S \cup \{i\}) = \utility(S), \quad \forall S \subseteq N \setminus \{i\},
\end{equation}
ensuring that adding player \( i \) never alters coalition utility, so the cooperative game’s allocation cost matches the intent’s objective exactly. Theorem~\ref{theorem:zero_payoff} verifies that \eqref{eq:v_limitation} meets this requirement.

\begin{theorem}[Zero payoff in the core]
\label{theorem:zero_payoff}
Let $\mathcal{CG}(N, u)$ be a cooperative game with characteristic function $v: 2^N \to \mathbb{R}$, where $N$ is the set of players.  
If player $i \in N$ satisfies
\[
\utility(S \cup \{i\}) = \utility(S), \quad \forall S \subseteq N \setminus \{i\}.
\]
In any allocation $\pi$ in the core of $\utility$, it holds that $\pi_i = 0$.
\end{theorem}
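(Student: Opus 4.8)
The plan is to sandwich $\pi_i$ between $0$ and $0$ using the two defining properties of a core allocation: \emph{efficiency} ($\sum_{j\in N}\pi_j = \utility(N)$) and \emph{coalitional rationality} ($\sum_{j\in S}\pi_j \ge \utility(S)$ for all $S\subseteq N$), as given in \eqref{eq:Core}. Because the core in \eqref{eq:Core} is taken over $\mathbb{R}_{+}^{N}$, the lower bound $\pi_i \ge 0$ is immediate, so the entire argument reduces to establishing the matching upper bound $\pi_i \le 0$.

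For the upper bound, I would apply the coalitional-rationality constraint to the single coalition $S = N\setminus\{i\}$, obtaining $\sum_{j\in N\setminus\{i\}}\pi_j \ge \utility(N\setminus\{i\})$. Subtracting this from the efficiency identity $\sum_{j\in N}\pi_j = \utility(N)$ gives $\pi_i = \utility(N) - \sum_{j\neq i}\pi_j \le \utility(N) - \utility(N\setminus\{i\})$. The final step is to invoke the hypothesis with the particular choice $S = N\setminus\{i\}$ (which trivially satisfies $S \subseteq N\setminus\{i\}$), yielding $\utility(N) = \utility\big((N\setminus\{i\})\cup\{i\}\big) = \utility(N\setminus\{i\})$. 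Hence the right-hand side above is $0$, so $\pi_i \le 0$, and combining with $\pi_i \ge 0$ forces $\pi_i = 0$.

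There is no real obstacle here; the argument is a short "sandwich" and is essentially the classical dummy-player fact specialized to core allocations. The only point requiring mild care is that one must use the paper's convention that core allocations are non-negative to secure the lower bound --- equivalently, one could instead apply coalitional rationality to the singleton $S=\{i\}$, which gives $\pi_i \ge \utility(\{i\}) = \utility(\emptyset) = 0$ under the usual normalization $\utility(\emptyset)=0$; without some such condition the hypothesis alone only delivers $\pi_i \le 0$. It is also worth noting (though not needed) that the full strength of the hypothesis "for all $S\subseteq N\setminus\{i\}$" is not exploited by this proof: only the instance $S = N\setminus\{i\}$ enters. Finally, if the core of $\utility$ is empty the statement holds vacuously, so no separate case analysis is needed.
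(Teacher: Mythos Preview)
Your proposal is correct and follows essentially the same argument as the paper: apply coalitional rationality to $S=N\setminus\{i\}$, combine with efficiency to get $\pi_i \le \utility(N)-\utility(N\setminus\{i\})=0$, and use the nonnegativity of core allocations for the matching lower bound. Your additional remarks about only needing the single instance $S=N\setminus\{i\}$ and about the role of the nonnegativity convention are accurate and go slightly beyond what the paper records.
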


\begin{proof}
By assumption, $\utility(S \cup \{i\}) = \utility(S)$ for all $S \subseteq N\setminus\{i\}$, so player $i$ has zero marginal contribution to any coalition.  
In particular, if $u(N) = u(N\setminus\{i\})$, then by Eq.~\eqref{eq:Core}, any core allocation must satisfy:
\begin{align*}
    &\sum_{j\in N\setminus\{i\}} \pi_j \geq u(N\setminus\{i\}) = u(N),\\
    &\sum_{j\in N} \pi_j = u(N).
\end{align*}
Combining the above expressions, and writing $\sum_{j\in N} \pi_j = \pi_i + \sum_{j\in N\setminus\{i\}} \pi_j$, we conclude that
\[
\pi_i = u(N) - \sum_{j\in N\setminus\{i\}} \pi_j \leq 0.
\]
On the other hand, by Eq.~\eqref{eq:Core}, $\pi_i \geq 0$.  
Therefore, $\pi_i = 0$.
\end{proof}
\noindent This result shows that a player whose participation never changes the game’s value receives zero payoff in the core, thus zero payoff in the nucleolus, which in our grammar-based framework means that the probability assigned to its associated production rule $P(r) = 0$ as well.

By analogy with Theorem~\ref{theorem:zero_payoff}, any production rule in a probabilistic grammar that contributes nothing beyond the forms allowed by the target grammar can have its probability set to zero without affecting generative capability. 
This allows us to enforce specific levels of the Chomsky hierarchy:
\begin{itemize}
    \item \textbf{Regular grammar:} This restricted form corresponds to the structure of a Hidden Markov Model (HMM). Set probabilities to 0 for all rules except those of the form $A \to aB$ or $A \to a$ (eliminating, for instance, $A \to BC$ rules with multiple nonterminals on the right-hand side).
    \item \textbf{Context-free grammar:} Set probabilities to 0 for all rules except those where the left-hand side is a single nonterminal, $A \to \gamma$ (eliminating context-dependent forms such as $\alpha A \beta \to \gamma$).
    \item \textbf{Context-sensitive grammar:} Keep all rules without setting any to zero, thus preserving maximum expressive power.
\end{itemize}
This framework directly ties the {cooperative game-based } utility constraints to the expressiveness control of the grammar. In Fig.~\ref{fig:grammar}, we illustrate a simple example demonstrating how this correspondence naturally emerges in practice. Specifically, to ensure the continuity~\footnote{Trajectories evolve continuously in space and time; targets do not appear or disappear instantaneously, nor do they teleport between locations.} and compatibility of trajectories assigned to different targets within a group, the trajectory sets are hierarchically constrained. That is, the set of trajectories that can be generated by one target is a subset of those generated by another target, and so on. 

For instance, consider a group of targets where one agent generates trajectories corresponding to the string $d^n$, another generates $d^n b^n$, and a third generates $d^n b^n c^n$. Each agent’s trajectory grammar extends that of the previous one, thereby maintaining continuity while allowing for increasing expressiveness or complexity in motion behavior. As shown in Fig.~\ref{fig:grammar}, all trajectories originate from the same initial point, ensuring spatial continuity and cooperative feasibility across the group.


\subsection{Fisher-information Based Characteristic Function}
\label{subsec:fisher}
As a concrete instantiation, we propose an explicit Fisher-information based characteristic function that not only satisfies the conditions of Theorem~\ref{theorem:zero_payoff} but also provides strong domain-specific insight in the context of  coordinated group behavior.

Consider the scenario where sensors are assigned to each target over a surveillance environment, each with potentially overlapping fields of view and heterogeneous sensing capabilities. 
Due to such spatial correlations and redundancy in target placements, certain target may offer no additional independent information about the environment beyond what is already captured by other target. 
For instance, a target positioned far from the region of interest, or entirely shadowed within the coverage of nearby sensors, contributes zero marginal information to the collective tracking performance. 
In such cases, Theorem~\ref{theorem:zero_payoff} implies that the payoff allocated to these sensors in the core is zero, and their associated production rules in the stochastic grammar receive zero probability. 
This motivates defining the coalition characteristic function in terms of mutual information between environmental measurements and the coalition of targets, enabling a principled analysis of conditions under which a target's marginal contribution is null. The characteristic function is defined as:
\begin{equation}
    \utility(S) = \mathrm{tr}(J_S)
\label{eq:fisher}
\end{equation}
where $S$ is the coalition of players, $J_S$ is the accumulative fisher information of the coalition, and $\mathrm{tr}$ denotes the trace of a matrix. The detailed formation of the characteristic function is as follows.
Let \( \zeta \in \mathbb{R}^d \) represent some surveillance parameter where \(j^{\mathrm{th}}\) target in the coalition  obtain $U$ measurements denoted using $m_j$ such that:
\[
m^u_j = H_j \zeta + \nu^u_j, 
\]
where, $u=1,\,2,\,...,\,U$.  \(H_j \in \mathbb{R}^{c \times d}\) is the observation matrix associated with target \(j\), and 
\(\nu_j\) is zero-mean Gaussian noise with covariance \(R_j \in \mathbb{R}^{c \times c}\).
Then the probability distribution of measurement $m^u_j$ given $\zeta$ is:
$$
p(m^u_j | \zeta) = \frac{\exp\left({-\frac{1}{2}(m^u_j-H_j \zeta)^\top R^{-1}_j(m^u_j-H_j \zeta)}\right)}{\sqrt{(2\pi)^c|R_j|}}
$$
Then the likelihood function of $\zeta$ is:
$$
L_j(\zeta;\;m_{j}) = \prod_{u=1}^{U} p(m^u_j | \zeta) 
$$
The log likelihood function is:
\begin{align*}
l_j(\zeta;\;m_{j}) &= -\frac{1}{2}\log\!\left({(2\pi)^m |R_j|}\right) \nonumber \\
&\quad - \frac{1}{2}\sum_{u=1}^{U}(m^u_j - H_j \zeta)^\top R_j^{-1}(m^u_j - H_j \zeta).
\end{align*}

Then the Fisher information of target $j$ given $\theta$ is defined as:
$$
\mathcal{I}_j(\zeta) = \mathbb{E}[\nabla_\zeta^2 l(\zeta) | \zeta] = U H_j^\top R_j^{-1} H_j.
$$
When a coalition \(S \subseteq N\) is formed, it gathers all measurements 
\(\{m_j : j \in S\}\).  
These measurements are conditionally independent given \(\zeta\), and the combined Fisher information matrix is
\[
J_S =  U \sum_{j \in S} H_j^\top R_j^{-1} H_j.
\]

We define the coalition's utility as the trace of the Fisher information matrix:
\begin{equation}
\label{eq:FIMtrace}
    \utility(S) = U\mathrm{tr}\left( \sum_{j \in S} H_j^\top R_j^{-1} H_j\right).
\end{equation}

Now consider adding a player \(i\) to coalition \(S\), forming \(S' = S \cup \{i\}\).  
The updated Fisher information matrix becomes:
\[
J_{S'} = J_S + U H_i^\top R_i^{-1} H_i.
\]
The marginal contribution of player \(i\) is therefore:
\[
\utility(S') - \utility(S) = \mathrm{tr}(J_S') - \mathrm{tr}(J_S)
\]
If the additional term \(H_i^\top R_i^{-1} H_i = 0\),  the marginal contribution is zero.  
This occurs, for example, when the sensor corresponding to target \(j\) is missing observations ($\mathrm{tr}(R_j)$ is very large) or it is placed where it cannot observe the region of interest.  In that case we have:
\[
\utility(S \cup \{i\}) - \utility(S) = 0, \quad \forall S \subseteq N \setminus \{i\}.
\]

By Theorem~\ref{theorem:zero_payoff}, such players receive zero payoff in the core, and their corresponding production rules can be assigned zero probability in the stochastic grammar without affecting performance. 
Furthermore, 
if the characteristic function $\utility$ is supermodular, the Shapley value of the cooperative game lies within the core~\cite{shapley1953value}, where we show in Theorem~\ref{theorem:trace_supermodular}. 
Thus, the core, the nucleolus, and the Shapley value all provide valid principles for determining the allocation among coalitions $S$.

\begin{theorem}[Trace-of-sum is supermodular]
\label{theorem:trace_supermodular}
The characteristic function $\utility$ defined in \eqref{eq:FIMtrace} is modular. By modular we mean that for all $S\subseteq T\subseteq N$ and $j\in N\setminus T$,
\[
\utility(S\cup\{j\})-\utility(S)\;=\;\utility(T\cup\{j\})-\utility(T)\;=\;\operatorname{tr}(M_j).
\]
\end{theorem}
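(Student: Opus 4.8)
The plan is to exploit the linearity of the trace to collapse $\utility$ into an additive set function, after which modularity is immediate. First I would set $M_j := U\, H_j^\top R_j^{-1} H_j$ for each $j \in N$, noting that each $M_j$ is a fixed (coalition-independent) symmetric positive semidefinite matrix, so $\operatorname{tr}(M_j) \ge 0$ is well defined. Because $\operatorname{tr}(\cdot)$ is linear and the sum in \eqref{eq:FIMtrace} ranges over the members of the coalition, I can rewrite
\[
\utility(S) \;=\; \operatorname{tr}\!\Big( U\sum_{j\in S} H_j^\top R_j^{-1} H_j \Big) \;=\; \sum_{j\in S} \operatorname{tr}(M_j),
\]
so that $\utility$ is precisely the additive (modular) set function induced by the weights $\{\operatorname{tr}(M_j)\}_{j\in N}$, with $\utility(\emptyset)=0$.

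Second, I would compute the marginal contribution directly. For any $S \subseteq N$ with $j \notin S$, additivity gives $\utility(S\cup\{j\}) = \sum_{\ell\in S}\operatorname{tr}(M_\ell) + \operatorname{tr}(M_j) = \utility(S) + \operatorname{tr}(M_j)$, hence $\utility(S\cup\{j\}) - \utility(S) = \operatorname{tr}(M_j)$. Applying this identity with $S$ and again with $T$ — both of which exclude $j$, since the hypotheses are $S\subseteq T\subseteq N$ and $j\in N\setminus T$ — yields the two claimed equalities, and shows in particular that the increment is the coalition-independent constant $\operatorname{tr}(M_j)$. This establishes modularity.

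Finally, I would record the consequences: a modular function is simultaneously supermodular and submodular, so $\utility$ trivially satisfies the supermodularity (convexity) hypothesis under which the Shapley value lies in the core \cite{shapley1953value}; together with Theorem~\ref{theorem:zero_payoff} this licenses using the core, the nucleolus, or the Shapley value interchangeably for the allocation. There is essentially no hard step here — the only points requiring care are matching the definition of $M_j$ to the constant asserted on the right-hand side of the statement, and observing that $S\subseteq T$ with $j\in N\setminus T$ forces $j$ to be absent from both coalitions, so each marginal increment is exactly one extra summand. If one instead reads the theorem's title in the strict "supermodular" sense, I would simply note that modularity is the boundary case, $\utility(S\cup\{j\})-\utility(S) \le \utility(T\cup\{j\})-\utility(T)$ holding with equality, which is still sufficient for the core-membership conclusion.
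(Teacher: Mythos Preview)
Your proof is correct and follows essentially the same approach as the paper: both exploit linearity of the trace to reduce $\utility$ to an additive set function over the constants $\operatorname{tr}(M_j)$, after which the marginal increment is trivially coalition-independent and modularity (hence supermodularity with equality) follows. Your version is slightly more explicit in identifying $M_j = U H_j^\top R_j^{-1} H_j$ and in checking that $j\notin S,T$, but the substance is identical.
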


\begin{proof}
By additivity of the matrix sum and linearity of the trace,
\[
\utility(S\cup\{j\})-\utility(S)
= \operatorname{tr}\!\Big(\sum_{i\in S\cup\{j\}}M_i\Big) - \operatorname{tr}\!\Big(\sum_{i\in S}M_i\Big)
= \operatorname{tr}(M_j),
\]
which does not depend on $S$. The same calculation with $T$ in place of $S$ gives
\[
\utility(T\cup\{j\})-\utility(T)=\operatorname{tr}(M_j).
\]
Hence the marginal gain is constant across contexts, so the supermodularity inequality
\[
\utility(S\cup\{j\})-\utility(S)\;\le\;\utility(T\cup\{j\})-\utility(T)
\]
holds with equality (and similarly the submodularity inequality). Therefore $\utility$ is modular, and in particular supermodular.
\end{proof}

In summary, Sec.~\ref{sec:game} establishes a bridge between grammar–based trajectory modeling and cooperative game theory, enabling intent to be encoded as characteristic function-driven rule allocations. This integration provides a mathematically grounded mechanism to capture both the syntactic structure and strategic coordination underlying complex group behaviors.

\section{Parse-tree Based Inference of Group Intent using GTNN}
\label{sec:classifier}
\begin{figure*}[t!]
    \centering
    \includegraphics[width=1.05\linewidth]{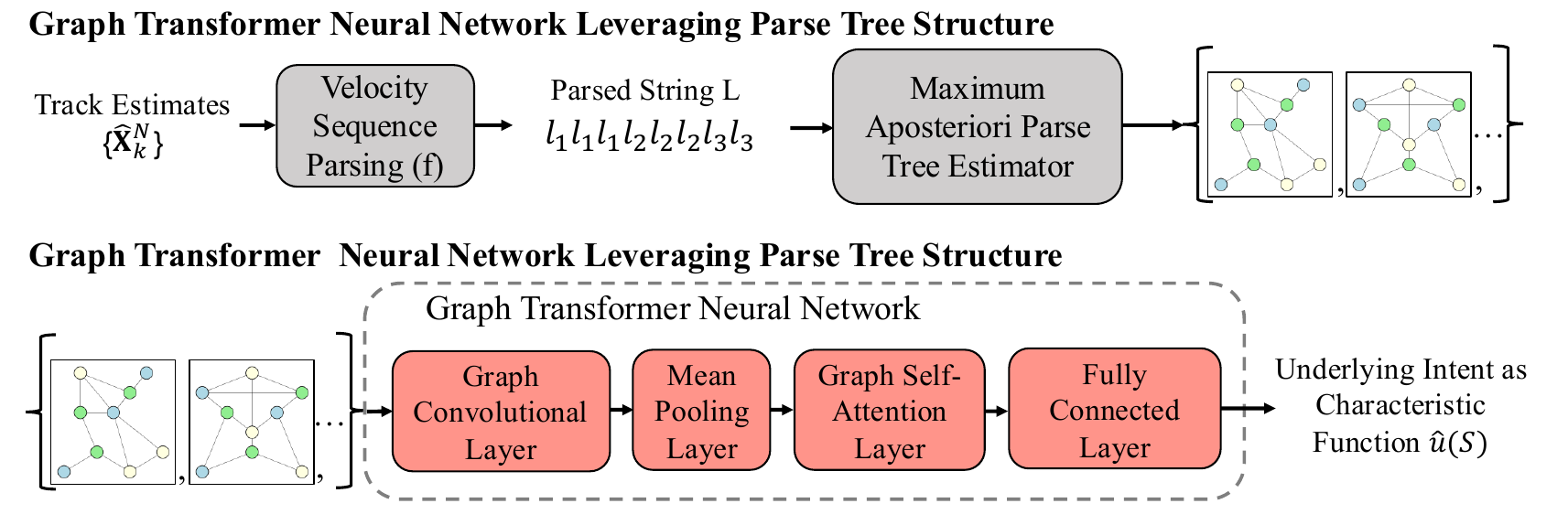}
    \caption{Overview of the proposed group intent analysis framework. The estimated states $\{\mathbf{\hat{X}}_k\}$ from the tracker are encoded using a point data encoder $f$, followed by maximum a posteriori (MAP) parse tree estimation using the CYK algorithm, which produces the parse tree $T$. The parse tree is fed into the \gls*{gtnn}, employing graph convolutional layer, mean pooling, graph self attention, and fully connected layers, to infer the characteristic function group intent.}
    \label{fig:design}
\end{figure*}

In this section, we present our second main result, namely an approach for inferring group intent, expressed as the characteristic function of a cooperative game in \eqref{eq:master}. We focus on stochastic context-sensitive grammars (SCSG), which subsume the expressive power of stochastic context-free (SCFG) and stochastic regular grammars (SRG); thus, a neural network trained on SCSG-based intent can also capture SCFG and SRG-driven intents. Section~\ref{sec:classifier}--\ref{subsec:rg_cfg} outlines inference methods for SRG and SCFG, Sec.~\ref{sec:classifier}--\ref{subsec:parsing} details the encoding of group trajectories into parse trees, and Sec.~\ref{sec:classifier}--\ref{subsec:neural_network}, Fig.~\ref{fig:design}, present the \gls*{gtnn} architecture for predicting the characteristic function from these parse trees.

\subsection{Stochastic Regular Grammars (SRG) and Stochastic Context Free Grammars (SCFGs)}
\label{subsec:rg_cfg}
SRGs (which are equivalent to Hidden Markov Models) extend classical regular grammars by associating probabilities with each production rule, enabling the modeling of both structural constraints and statistical tendencies in sequential data \cite{manning1999foundations,hopcroft2006automata}. Inference for SRGs is often formulated in terms of probabilistic finite-state automata, where algorithms such as the forward–backward procedure or the Baum–Welch algorithm are employed to compute sequence likelihoods and optimize rule probabilities \cite{baum1970maximization,juang1991hidden}. These methods are computationally efficient, making SRGs suitable for domains where the underlying structure is shallow and non-hierarchical.

Stochastic context-free grammars (SCFGs) generalize this approach to context-free grammars, allowing for recursive and nested structures \cite{booth1973applicability}. Exact inference in SCFGs is typically performed via probabilistic parsing algorithms, most notably the inside–outside algorithm \cite{baker1979trainable}, or probabilistic variants of the \gls*{cyk} algorithm \cite{Kasami1965AnER}. For large or complex grammars, approximate inference strategies such as Monte Carlo sampling over parse trees \cite{johnson2007adapting} or variational inference \cite{kurihara2006bayesian} are used to improve scalability.

However, the Bayesian inference for \gls*{scsg} is NP-hard. To mitigate the computational challenge, we represent the 
trajectories generated by the
syntactic rules of the SCSG 
as parse trees which allow us to exploit the capabilities of graph neural network for efficient inference of \gls*{scsg}s, which we will introduce in Sec.~\ref{sec:classifier}-\ref{subsec:parsing} and \ref{sec:classifier}-\ref{subsec:neural_network}.

\subsection{Trajectory Encoding and Grammar Tree Parsing}
\label{subsec:parsing} In this subsection, we present the trajectory as a parse tree representation, which then serves as the input embedding for the \gls*{gtnn}.

A straightforward approach to classifying the state sequence $\{\mathbf{\hat{X}}^N_k\}$ from \eqref{eq:track_est} would be to train a deep classifier that directly takes $\{\mathbf{\hat{X}}^N_k\}$ value sequences as input. However, this naive strategy overlooks the rich hierarchical structure encoded in the parse tree of a \gls*{scsg}, which can provide valuable contextual information for more accurate and interpretable classification.
In this section, we exploit the parse tree structure to classify target intent, as illustrated in Fig.\,\ref{fig:design}. Our ``grammar aware" approach involves converting trajectory data into a parse tree and inferring the group intent.

\noindent{\bf Velocity Sequence Parsing:} The estimated states $\{\mathbf{\hat{X}}^N_k\}$ form a sequence of groups of $N$ target states at timestep $k$, capturing the variation between timesteps. Each $\{\mathbf{\hat{X}}^N_k\}$ is then mapped to a terminal symbol in the \gls*{scsg} using the parsing approach described in Sec.~\ref{sec:grammar}-\ref{subsec:IIB},
\begin{equation}
\label{eq:terminal_symbol}
L = f({\mathbf{\hat{X}}^N_k}),
\end{equation}
where $f$ denotes the parsing function. Consequently, the entire trajectory is encoded as a string
$L = l_1l_2\ldots l_{k}$, where each $l_k$ is a terminal symbol from the alphabet set $\mathcal{A}$ of the \gls*{scsg} $G$ defined in Sec.~\ref{sec:grammar}-\ref{subsec:grammar_modeling}. This symbolic sequence provides a compact representation of the trajectory, enabling efficient parsing and clustering through the grammatical structure of the \gls*{scsg}.

\noindent{\bf Grammar Tree Parsing: }
Given a generated string $L$, we derive its structural parse tree $T$ by employing a chart-based algorithm in the style of CYK, adapted to Linear Context-Free Rewriting Systems (LCFRS)~\cite[Ch.~3]{seki1991multiple}. This extension allows us to handle discontinuous constituents while maintaining polynomial-time complexity ($O(n^6)$), unlike the intractability of full context-sensitive parsing where derivation tracking is PSPACE-complete. To ensure tractability, our parsing is restricted to this mildly context-sensitive class, with non-terminal spans bounded to pairs. Under these constraints, we approximate the derivation tree as  

\begin{equation}
\label{eq:parse_tree}
    T = \text{Parse}(L, G), 
\end{equation}
where $G$ is a mildly context-sensitive grammar represented as an LCFRS, yielding a framework that is both expressive enough for natural language phenomena and computationally feasible.

\subsection{Characteristic Function Inference via GTNN}
\label{subsec:neural_network}

\subsubsection{GTNN Architecture}

The \gls*{gtnn} architecture is illustrated in Fig.~\ref{fig:design}. 
The self-attention mechanism in the \gls*{gtnn} is explicitly obtained from the underlying track estimates as will be specified in~\eqref{equation:alpha} below. 
The hierarchical tree structure $T = (V,E)$, obtained from grammar-based parsing of target trajectories by \eqref{eq:parse_tree}, is used to infer a set of parameters from which the characteristic function $\utility(S)$ of a cooperative game can be computed. Specifically, we employ a graph neural network based on the Graph Transformer architecture~\cite{yun2019graph} to map $T$ into a coalition-aware representation. 

The Graph Convolutional Network (GCN) in the \gls*{gtnn} serves as the first stage for extracting node-level representations from the parse tree $T = (V, E)$. Each node $i \in V$ is initialized with a feature vector $h_i^0 \in \mathbb{R}^d$, which encodes structural attributes such as degree, in-degree, and clustering coefficient, describing its local role within the graph. The GCN is composed of multiple stacked convolutional layers, where each layer aggregates information from a node’s local neighborhood to capture both immediate and higher-order dependencies. The feature update at the $l$-th layer is defined as
\[
h_i^{l+1} = \mathrm{ReLU}\!\left(\sum_{j \in \mathcal{N}(i) \cup \{i\}} \frac{1}{c_{ij}}\, W^{l} h_j^{l}\right),
\]
where $h_j^{l}$ denotes the feature of node $j$ at layer $l$, $W^{l}$ is a trainable weight matrix, $c_{ij}$ is a normalization coefficient based on node degrees, and $\mathcal{N}(i)$ represents the neighborhood of node $i$. Through successive layers, each node embedding $h_i^{L}$ integrates information from increasingly larger neighborhoods, resulting in representations that capture rich structural and contextual information to be processed by the subsequent transformer module.

The node embeddings produced by the GCN are aggregated into a single, fixed-size graph-level representation through a mean pooling operation. This step ensures permutation invariance and enables graphs of varying sizes to be represented consistently. For the graph with node set $\mathcal{V}_b$, the pooled representation is computed as  
\[
h_b = \frac{1}{|\mathcal{V}_b|} \sum_{i \in \mathcal{V}_b} h_i^{L},
\]
where $h_i^{L}$ denotes the final embedding of node $i$ after the last GCN layer. The resulting vector $h_b$ provides a compact summary of the graph’s structural and semantic characteristics and serves as the input token for the subsequent transformer encoder.

The transformer module in the \gls*{gtnn} takes as input the pooled graph embeddings $h_b$ obtained from the previous stage and models their global dependencies through a multi-head self-attention mechanism. This component enables the network to capture contextual interactions between graphs or higher-level structures that cannot be learned through local message passing alone. The self-attention layer computes an attention-weighted representation for each input embedding as  
\begin{equation}
\label{equation:alpha}
    \tilde{h}_b = \mathrm{softmax}\!\left(\frac{Q_b K_b^{\top}}{\sqrt{d_k}}\right)V_b,
\end{equation}
where $Q_b = h_b W_Q$, $K_b = h_b W_K$, and $V_b = h_b W_V$ are the query, key, and value projections of $h_b$, respectively, and $d_k$ is the key dimension. The resulting output $\tilde{h}_b$ encodes global contextual information across all graph embeddings, forming a refined representation that is subsequently processed by the final fully connected layer.

{\em Remark: From Tracker to Self-attention.} Note  $h_i^0$ depends on the parse-tree \eqref{eq:parse_tree}, which in turn depends on the track estimates \eqref{eq:bayes}. Therefore, \eqref{equation:alpha} relates the track estimates to the self attention mechanism of the GTNN.

Finally, the \textbf{Fully Connected (Dense) layer} transforms the transformer output $\tilde{h}_b$ into the final prediction vector $\theta$. This mapping is expressed as  
\[
\theta = g(W_{\text{out}} \tilde{h}_b + b_{\text{out}}),
\]
where $W_{\text{out}}$ and $b_{\text{out}}$ are learnable parameters, and $g(\cdot)$ denotes a nonlinear activation function such as ReLU. In this formulation, $\theta$ represents the graph-level output of the \gls*{gtnn}, capturing both the local structural features extracted by the GCN and the global contextual dependencies modeled by the transformer. Each component of $\theta$ corresponds to a learned embedding associated with a specific graph or agent, serving as the basis for evaluating the cooperative-game characteristic function $\utility(S)$ across different coalitions.



\subsubsection{Inference of Group Intent as Characteristic function of Cooperative Game}
In this work, the neural network transforms the (what are these representations) global graph representation into the vector $\theta$ that parametrizes the characteristic function $\utility(S)$. 
This stage can learn complex nonlinear mappings, enabling the model to reorganize and combine features extracted from the graph in a way that directly supports coalition value prediction.
These updated node embeddings are aggregated using a permutation-invariant pooling operation to produce a vector $\theta \in \mathbb{R}^{N}$, where each item $a_i$ represents a learned embedding for player $i$ in the game.
For any coalition $S \in \{0,1\}^N$,  we define $\hat{v}(S)$ using a generic form of characteristic function:
\begin{equation}
\hat{u}(S) = \theta^\top \!\left(S + \sigma(\,\mathit{\Lambda} S )\right),
\end{equation} 
where $\mathit{\Lambda} \in \mathbb{R}^{N \times N}$ is a fixed parameter, and $\sigma(\cdot)$ denotes the sigmoid activation function. This construction preserves the additive contributions of individual players (through $S$) while introducing nonlinear corrections (through $\sigma(\mathit{\Lambda} S + \mathit{\delta})$) that capture richer coalition-level interactions. Consequently, the characteristic function can model complex synergies and dependencies among players that extend beyond purely linear effects. 


The model can evaluate all $2^N$ possible coalitions in each forward pass. The training loss is then defined as the mean squared error over all coalitions:
\begin{equation}
\label{eq:loss}
    L_{\mathrm{MSE}}= \frac{1}{2^N} \sum_{ S \in \{0,1\}^N} \left(\hat{u}_{}(S) - u_{\mathrm{}}(S)\right)^2,
\end{equation}
where $\utility(S)$ is the ground-truth characteristic function value for coalition $S$. Gradients from this loss are back-propagated through both the coalition-value mapping and the Graph Transformer, enabling end-to-end learning of player embeddings that fully encode the coalition structure.

By utilizing symbolic encoding and grammar-based parsing to extract structured inputs, and by training the Graph Transformer so that its output directly parametrizes $\utility(S)$ for all coalitions, our approach yields a model that is both expressive and faithful to cooperative game semantics, outperforming flat-sequence baselines in capturing the group intent.

\section{Numerical Results}
\label{sec:experiments} 
In this section, we discuss our numerical experiments and compare our proposed methods with  baseline methods. 
We compare our graph neural network with our baselines and show that the graph neural network that exploits the parse-tree structure of achieve better performance in predict the characteristic function as the group intent. Furthermore, we show that our methods also achieve better prediction performance in the domain of SCFG and SRG (HMMs), not simply in SCSG.

\textbf{Synthetic Dataset Generation\footnote{For the  reproducibility of the results, the codes have been uploaded in \href{https://github.com/yimingz1218/SCSG}{GitHub}.}.} 
The data-generation process is governed by production rules defined through characteristic functions. Specifically, we define ten distinct characteristic functions, each associated with its own set of production rules. 
Each data sample is represented as a tuple $(L, \utility(\cdot))$, where $L$ is a string derived from the velocity sequence, as illustrated in Sec.~\ref{sec:grammar}-\ref{subsec:IIB}, and $\utility(\cdot)$ is the characteristic function encoding the production rules that generate the corresponding trajectory. For every sample, a specific characteristic function is assigned, which probabilistically governs how the trajectory is produced according to its associated production rules.
The training set consists of 50000 samples (5000 for each distinct $\utility(\cdot)$, and the test set contains 5000 samples (500 for each distinct $\utility(\cdot)$. Each sample is an SCSG sequence  with the ground-truth characteristic function $\utility(S)$ determining the underlying probability distribution.  

\DIFaddbegin \textbf{\DIFadd{Model Parameters.}}
\DIFadd{For model training, we use the Adam optimizer with a learning rate of 0.001. 
The training objective is defined by the loss function~\eqref{eq:loss}. 
The model is trained for 5 epochs.
The model follows a hybrid GCN-Transformer architecture, where node features are first encoded using a Graph Convolutional Network (GCN) and then processed by a Transformer encoder. 
The model uses an input feature dimension of 5, hidden size of 128, two Transformer encoder layers with 4 attention heads, and an output dimension of 5.
}

\DIFaddend To assess the performance of our graph neural network for group intent inference, we define the \emph{success rate} $\kappa$ as the proportion of test cases in which the predicted characteristic function attains a loss~\eqref{eq:loss} that is sufficiently small relative to the ground truth. Formally, 
\begin{equation}
\label{eq:eva}
    \kappa = \frac{1}{\mathrm{M}} \sum_{j=1}^{\mathrm{M}} \mathbb{I}_{\{\!L_{\mathrm{MSE}, j} \leq \eta \}},
\end{equation}
where $\mathbb{I}_{\{\cdot\}}$ denotes the indicator function, which equals $1$ if the condition inside the brackets holds and $0$ otherwise. The term $L_{\mathrm{MSE}, j}$ represents the mean squared error for test case $j$, and $\mathrm{M}$ is the total number of test cases.
Thus, only predictions with losses below the threshold~$\eta$ contribute positively to the success rate. 

The \emph{sensor error probability} $q$ characterizes the level of noise from both \DIFdelbegin \DIFdel{both }\DIFdelend the target dynamics and the observation process in~\eqref{eq:statespace} and~\eqref{eq:observe}. 
Rather than perturbing the production rules directly, which would compromise the structural consistency of the grammar, $q$ is incorporated through a noisy terminal mechanism. In this formulation, $q$ governs the probability that a terminal symbol is replaced or corrupted during generation, thereby providing a principled means of modeling stochastic observation noise while preserving the integrity of the underlying grammar. Specifically, $q$ controls the likelihood that the grammar introduces random perturbations at the terminal level. In a stochastic grammar, $\mathcal{P}(r)$ denotes the probability of a production rule $r$, while $\mathcal{Q}(r)$ represents the set of all rules sharing the same left-hand side as $r$, as defined in equation~\eqref{eq:norm_prob}.

To incorporate perturbation, we augment the original set $\mathcal{Q}(r)$ with a special noise rule $r'$, yielding the updated distribution $\mathcal{P}_{Q'}(r)$ over the new set $\mathcal{Q'}(r)$. Each production step is then modified as follows: with probability $1-q$, the grammar samples from the original distribution and applies a standard rule; with probability $q$, the expansion is replaced by the noise symbol $\epsilon$.
Formally, for any production rule $i$, this is expressed using the indicator function $\mathbb{I}{{\cdot}}$, which equals $1$ if the condition inside holds and $0$ otherwise:
\begin{align}
\label{eq:Perturbation}
\mathcal{P}_{\mathcal{Q}'}(i) = (1-q)\,\mathbb{I}_{\{i \in \mathcal{Q}(r)\}}\mathcal{P}(i) + q\,\mathbb{I}_{\{i \notin \mathcal{Q}(r)\}}.
\end{align}
This construction rescales the probabilities of all original rules by a factor of $(1-q)$ and assigns the remaining probability mass $q$ to the special noise symbol $\epsilon$. As a result, the total probability remains normalized, and the system continues to define a valid stochastic grammar\footnote{The noise parameter 
$q$ depends on both the process noise in \eqref{eq:statespace} and the observation noise in \eqref{eq:observe}. However, it also alters the grammar rules, making it a more general source of uncertainty that extends beyond the process and observation noise. }.






\textbf{Baseline Models for Comparison.}
We benchmark our graph neural network approach for group intent inference against \DIFdelbegin \DIFdel{other }\DIFdelend \DIFaddbegin \DIFadd{several representative }\DIFaddend baselines. These include: DeepeST~\cite{icaart21}, an LSTM-based model designed for \DIFdelbegin \DIFdel{spatio temporal }\DIFdelend \DIFaddbegin \DIFadd{spatio-temporal }\DIFaddend sequence modeling; TraClets~\cite{KONTOPOULOS2023101306}, a vision-based model that converts trajectories into 2D representations for CNN inference; and XGBoost~\cite{chen2016xgboost}, a gradient-boosted decision tree classifier that processes structured state vector inputs. \DIFdelbegin \DIFdel{These }\DIFdelend \DIFaddbegin \DIFadd{In addition, we consider trajectory representation learning approaches such as the multilevel attention framework for driver identification ATTraj2vec ~\mbox{
\cite{li2025trajectory} }\hskip0pt
and a deep semi-supervised learning model for transportation mode detection DSM~\mbox{
\cite{sadeghian2024deep}}\hskip0pt
. These methods emphasize learned trajectory embeddings and hierarchical feature extraction for mobility understanding tasks. Collectively, the }\DIFaddend baselines span three major families of group intent classifiers—sequential, image-based, \DIFdelbegin \DIFdel{and tabular}\DIFdelend \DIFaddbegin \DIFadd{tabular, and representation-learning-based}\DIFaddend —providing comprehensive coverage of common \DIFdelbegin \DIFdel{strategies}\DIFdelend \DIFaddbegin \DIFadd{modeling strategies for trajectory-driven inference}\DIFaddend .
As shown in Fig.~\ref{fig:sota}, the grammar-aware graph neural network outperforms all baselines across varying sensor error probability values. While all models experience reduced accuracy \DIFdelbegin \DIFdel{under increasing noise }\DIFdelend \DIFaddbegin \DIFadd{as noise increases}\DIFaddend , our method \DIFdelbegin \DIFdel{maintains higher accuracy compared to other methods in }\DIFdelend \DIFaddbegin \DIFadd{consistently maintains higher performance across }\DIFaddend different sensor error \DIFdelbegin \DIFdel{probability values}\DIFdelend \DIFaddbegin \DIFadd{settings}\DIFaddend . This advantage stems from \DIFdelbegin \DIFdel{our }\DIFdelend \DIFaddbegin \DIFadd{the }\DIFaddend model’s ability to \DIFaddbegin \DIFadd{explicitly }\DIFaddend encode structural dependencies in \DIFdelbegin \DIFdel{the }\DIFdelend grammar-based parse trees and \DIFdelbegin \DIFdel{represent them through relational reasoning within the graph neural network.
}\DIFdelend \DIFaddbegin \DIFadd{capture relational interactions through graph-based reasoning, making it more robust to sensor perturbations and structural uncertainty.
In addition to the accuracy comparison, we also report the mean squared error (MSE) results in Fig.~\ref{fig:sota_mse}. The MSE comparison provides a complementary perspective on prediction quality by measuring the deviation between the predicted characteristic values and the ground truth. Consistent with the accuracy results, our grammar-aware graph model achieves lower MSE than the competing baselines across different sensor noise levels, further demonstrating its robustness and reliability.
}

\DIFaddend \begin{figure}[h!]
    \centering
    \includegraphics[width=\linewidth]{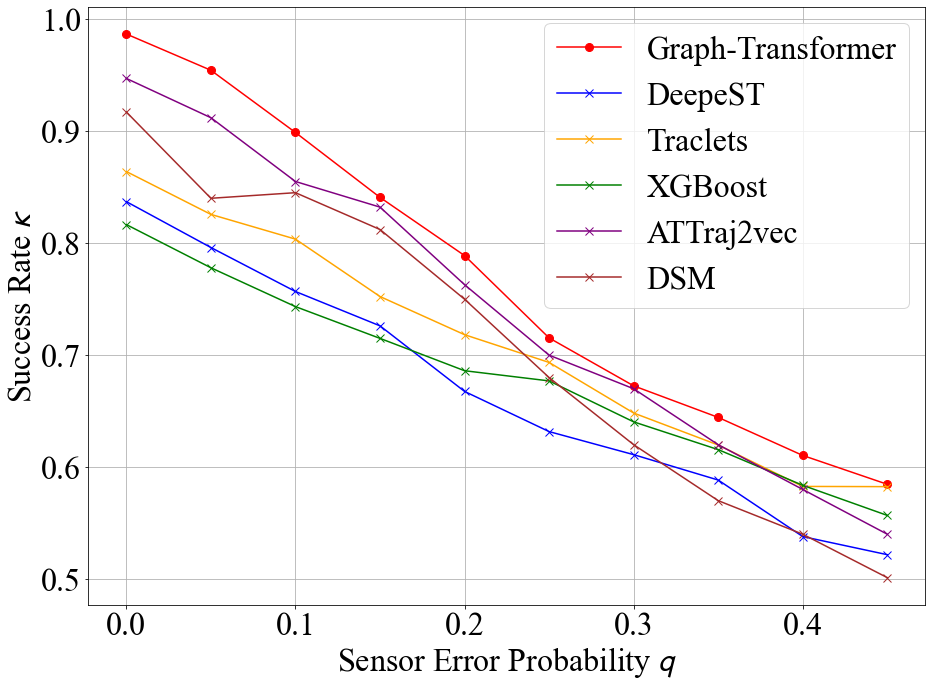}
    \caption{Comparison of model \DIFdelbeginFL \DIFdelFL{accuracy}\DIFdelendFL \DIFaddbeginFL \DIFaddFL{success rate}\DIFaddendFL , \eqref{eq:eva}, across sensor error probability values, \eqref{eq:Perturbation}, for different models: our proposed SCSG-aware \DIFdelbeginFL \DIFdelFL{GNN}\DIFdelendFL \DIFaddbeginFL \DIFaddFL{GTNN}\DIFaddendFL , DeepST, \DIFdelbeginFL \DIFdelFL{and }\DIFdelendFL XGBoost\DIFaddbeginFL \DIFaddFL{, Traclets, ATTraj2vec, and DSM}\DIFaddendFL . We show the superior performance of \DIFdelbeginFL \DIFdelFL{GNN }\DIFdelendFL \DIFaddbeginFL \DIFaddFL{GTNN }\DIFaddendFL across varying sensor error probability values in group intent inference.}
    \label{fig:sota}
\end{figure} 

 \begin{figure}[h!]
    \centering
    \includegraphics[width=\linewidth]{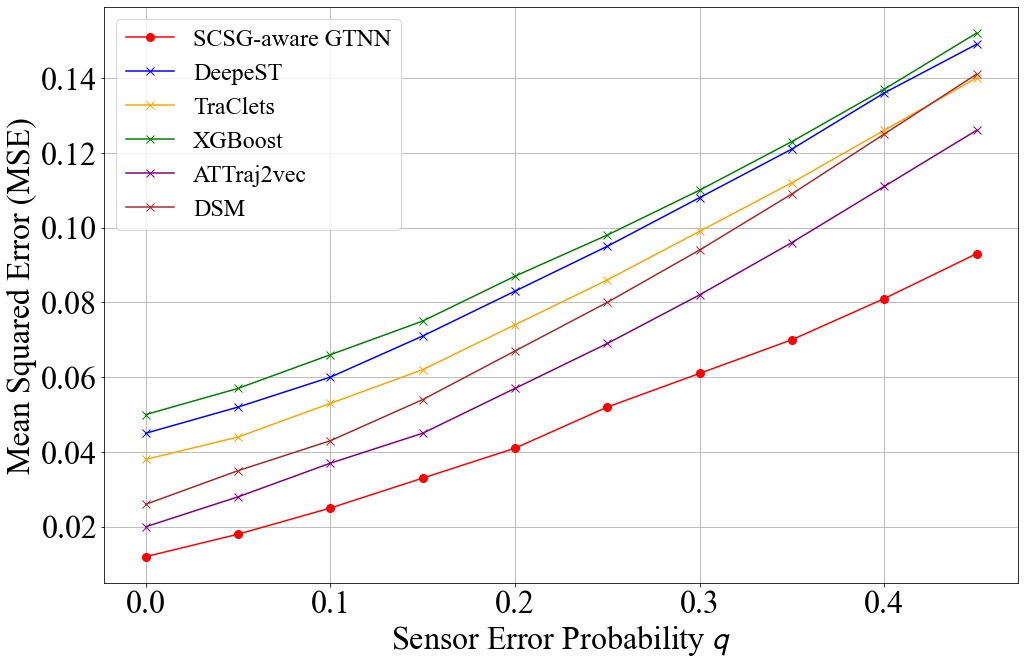}
    \caption{\DIFaddFL{Comparison of the mean squared error (MSE) between the estimated characteristic function and the ground truth, as defined in \eqref{eq:loss}, across different sensor error probability values in \eqref{eq:Perturbation}, for our proposed SCSG-aware GTNN and the baseline models DeepST, XGBoost, TraClets, ATTraj2vec, and DSM. The results show that GTNN consistently achieves lower MSE under varying sensor noise levels, demonstrating its superior robustness and effectiveness for group intent inference.}}
    \label{fig:sota_mse}
\end{figure}

\textbf{Comparison with Text-Based Models.}
 Fig.~\ref{fig:aba} compares our grammar-tree-based graph models (Graph-CNN, Graph-LSTM, Graph-Transformer) with their corresponding text-based versions (Text-CNN, Text-LSTM, Text-Transformer). Results clearly indicate the superiority of the graph-based models, which utilize the hierarchical parse-tree structure to preserve the syntactic and semantic properties of group intent, under increasing sensor error probability values. The graph representations allows the model to encode structural dependencies in the grammar-based parse trees and represent them through relational reasoning within the graph neural network. This confirms that directly predicting from strings is suboptimal compared to graph-structured inputs derived from our parsing step.
\DIFaddbegin \DIFadd{In particular, the graph-based models are better able to maintain stable performance as the sensor error probability increases, indicating stronger robustness to noise and perturbations in the observed sequences. This advantage arises because the parse tree explicitly organizes the compositional relationships among symbols, allowing the model to exploit both local and global structural information rather than relying only on sequential token patterns. 
These results further demonstrate that the proposed parsing step is not merely a preprocessing procedure, but a key component that exposes the latent grammatical structure needed for accurate characteristic-function prediction.
}\DIFaddend 

\begin{figure}[h!]
    \centering
    \includegraphics[width=\linewidth]{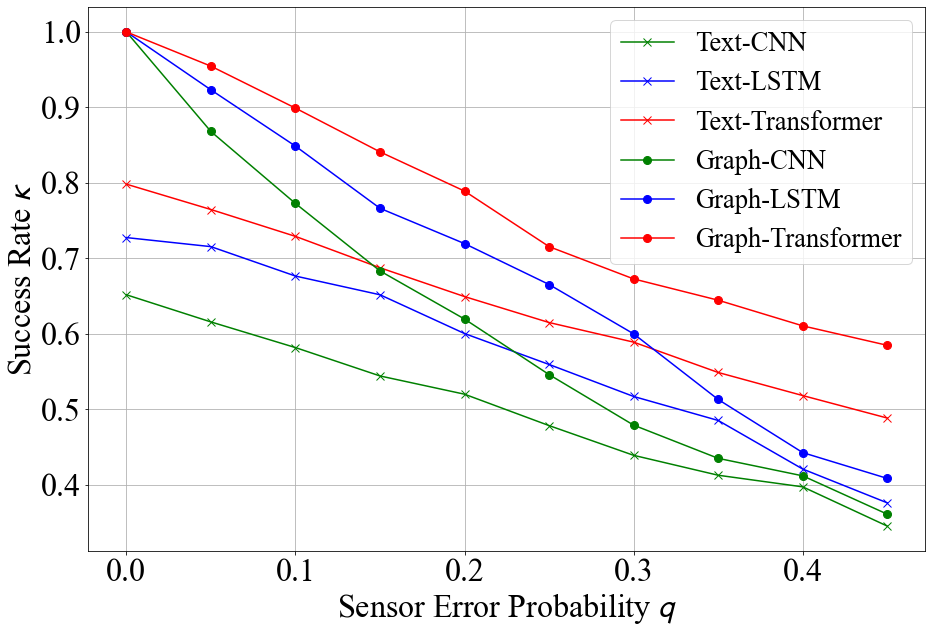}
    \caption{Comparison of model \DIFdelbeginFL \DIFdelFL{accuracy}\DIFdelendFL \DIFaddbeginFL \DIFaddFL{success rate}\DIFaddendFL , \eqref{eq:eva}, across sensor error probability values, \eqref{eq:Perturbation}, for text-based and graph-based architectures. We show that the proposed graph-based models consistently outperform text-based models in group intent inference.}
    \label{fig:aba}
\end{figure}

\textbf{Generalization to SCFG and SRG.}
 In Fig.~\ref{fig:more}, we further test the generalizability of our method in group intent inference by applying it to datasets governed by SCFG (Stochastic Context-Free Grammar) and SRG (Stochastic Regular Grammar), in addition to the default SCSG (Stochastic Context-Sensitive Grammar). 
Across all scenarios, the SCSG-based classifier still performs best, due to its ability to express richer dependencies in group behavior. However, our method also achieves competitive accuracy in SCFG and SRG settings, outperforming their respective grammar-specific baselines. This illustrates that our approach is not limited to a specific grammar formalism; instead, it generalizes well to different levels of linguistic complexity, reaffirming the versatility and scalability of our grammar-aware, graph-based classification framework.

\begin{figure}[h!]
    \centering
    \begin{subfigure}[b]{0.5\textwidth}
        \centering
        \includegraphics[width=\linewidth]{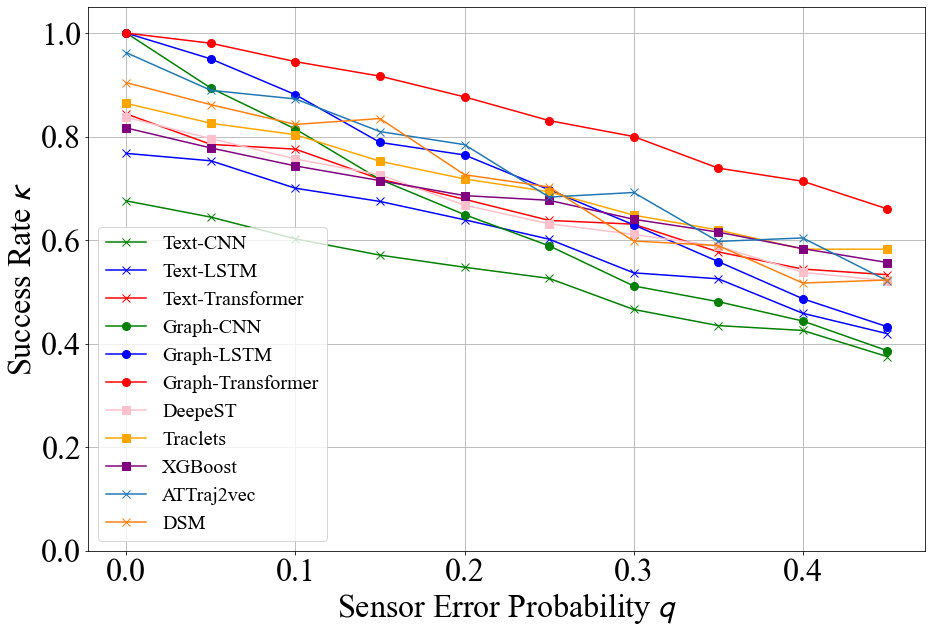}
        \caption{Performance compared using SCFG based production rules.}
        \label{fig:sub1}
    \end{subfigure}
    \vspace{2mm}
    \begin{subfigure}[b]{0.5\textwidth}
        \centering
        \includegraphics[width=\linewidth]{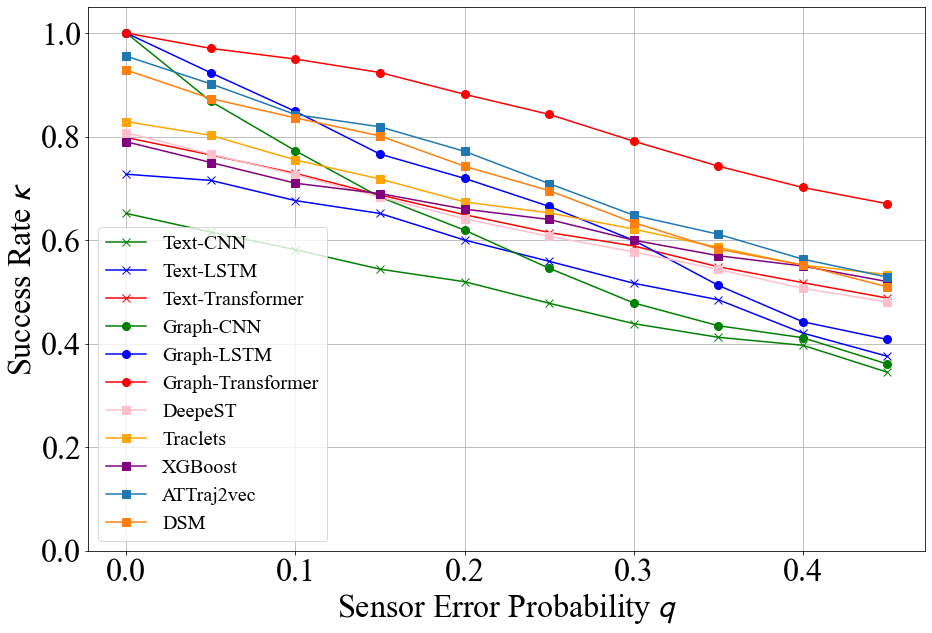}
        \caption{Performance compared using SRG based production rules.}
        \label{fig:sub2}
    \end{subfigure}
    \caption{Comparison of model accuracy, \eqref{eq:eva}, across different sensor error probability values, \eqref{eq:Perturbation}, for SRG and SCFG. The graph-based neural network consistently exhibits higher accuracy than other baselines, indicating greater certainty in group intent inference.}
    \label{fig:more}
\end{figure}

 {\textbf{Discussion on the Number of Targets.}
In our current experimental setting, the number of targets $N$ is fixed across samples so that the characteristic function of the cooperative game is defined over a consistent coalition space. To further examine the effect of group size, we additionally conducted experiments in which $N$ was varied while the underlying group intent was kept unchanged. The results, shown in Fig.~\ref{fig:varying_N}, indicate that the prediction success rate decreases as $N$ increases. This is expected, since a larger number of targets induces a larger coalition space and more complex grammar structures, which in turn lead to more challenging parse-tree-based intent inference.

\begin{figure}[h!]
    \centering
    \includegraphics[width=0.9\columnwidth]{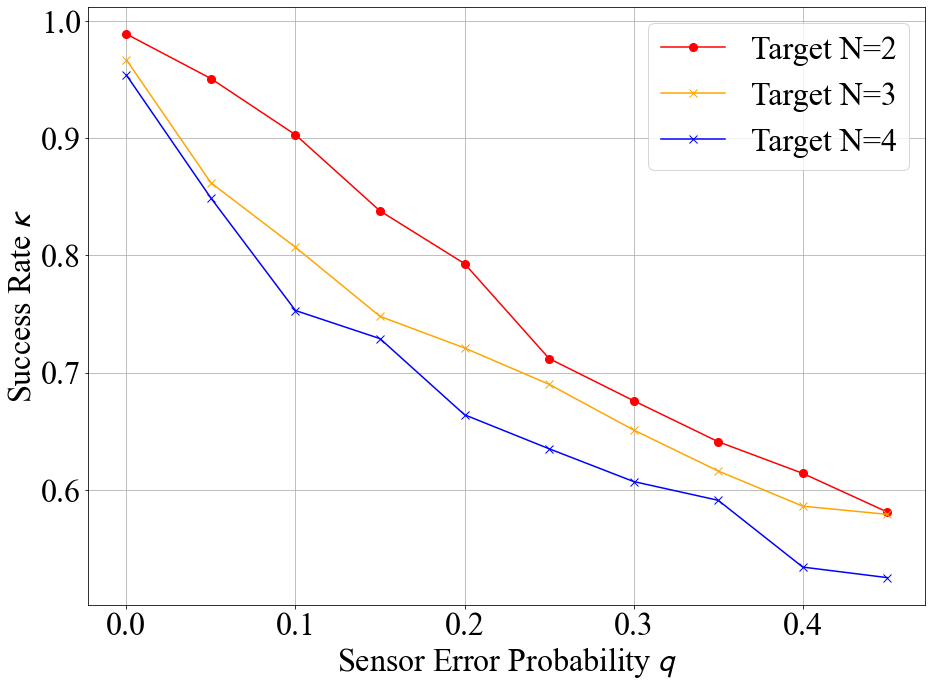}
    \caption{Comparison of model success rate versus the number of targets $N$ when the underlying group intent is kept fixed. The results show that the success rate decreases as $N$ increases, indicating that larger group sizes lead to more complex grammar and parse-tree structures, which make intent inference more challenging.}
    \label{fig:varying_N}
\end{figure}
}

\section{Conclusion}
\label{sec:conclusion}

This paper presented three main results: 

First, we formulated group intent as the outcome of a cooperative game. By specifying the characteristic function of the game in terms of the trace of the Fisher information matrix, the game inherits a structure that is amenable to efficient computation. Moreover, the core, Shapley value and nucleolus of the game provide  useful interpretations of group intent. 

Second, the outcome of the cooperative game was  used to specify the probabilities of the trajectory evolution of the targets using natural language models. These meta-level models fit seamlessly on top of a classical kinematic target tracking model, and serve as generative models of complex spatio-temporal trajectories of the targets, which cannot be captured by classical state space  models.

Third, we proposed a novel \gls*{gtnn} architecture to recover the characteristic function of the game, and therefore the intent. The proposed \gls*{gtnn} architecture exploits the grammatical structure of the trajectories. This ``grammar-aware" transformer demonstrated strong predictive accuracy, especially under noisy conditions and across different grammar classes (SCSG, SCFG, and SRG), outperforming baselines.

To summarize, we construct a model and a signal-processing intent inference methodology which spans from Bayesian Tracking to self-attention layer in transformer neural networks for group intent inference.

For future work, we will make extensions to online tracking, adversarial and deceptive behavior modeling, and applications in heterogeneous multi-agent domains such as autonomous driving and robotic swarms.

\bibliographystyle{ieeetr}
\bibliography{strings,refs}

\end{document}